\algnewcommand{\IIf}[1]{\State\algorithmicif\ #1\ \algorithmicthen}
\algnewcommand{\EndIIf}{\unskip\ \algorithmicend\ \algorithmicif}
\newcommand{\source}{source\xspace}
\newcommand{\destination}{destination\xspace}
\newcommand{\destinations}{destinations\xspace}
\newcommand{\neigh}{\mathcal{N}}
\newcommand{\Distance}{\mathit{d}}
\newcommand{\Col}{\operatorname{c}}
\newcommand{\ColSet}{\operatorname{cs}}
\renewcommand{\Pr}{\mathbb{P}}
\newcommand{\Layer}{\mathcal{L}}
\newcommand{\In}{\mathrm{in}}
\newcommand{\Out}{\mathrm{out}}
\newcommand{\SP}{\operatorname{SP}}
\newcommand{\isactive}{\textsf{active}}
\newcommand{\notactive}{\textsf{non-active}}
\title{Beeping Shortest Paths via Hypergraph Bipartite Decomposition}  
\author{Fabien Dufoulon}{Department of Computer Science, University of Houston, USA}{fabien.dufoulon.cs@gmail.com}{https://orcid.org/0000-0003-2977-4109}{Supported in part by NSF grants CCF-1717075, CCF-1540512, IIS-1633720, and BSF grant 2016419.}
\author{Yuval Emek}{Technion, Israel}{yemek@technion.ac.il}{https://orcid.org/0000-0002-3123-3451}{Supported in part by the Technion Hiroshi Fujiwara Cyber Security Research Center and the Israel National Cyber Directorate.}
\author{Ran Gelles }{Bar-Ilan University, Israel}{ran.gelles@biu.ac.il}{https://orcid.org/0000-0003-3615-3239}{Supported in part by ISF grant No.~1078/17 and BSF grant No.~2020277.}
\authorrunning{F. Dufoulon, Y. Emek, R. Gelles}
\keywords{Beeping Networks, Shortest Paths, Hypergraph Bipartite Decomposition}
\begin{document}

\maketitle

\begin{abstract}
Constructing a shortest path between two network nodes is a fundamental task
in distributed computing.
This work develops schemes for the construction of shortest paths in
randomized beeping networks between a predetermined \emph{\source} node
and an arbitrary set of \emph{\destination} nodes.
Our first scheme constructs a (single) shortest path to an arbitrary
\destination in
$O (D \log\log n + \log^3 n)$
rounds with high probability.
Our second scheme constructs multiple shortest paths, one per each
\destination, in
$O (D \log^2 n + \log^3 n)$
rounds with high probability.

Our schemes are based on a reduction of the above shortest path construction tasks to a decomposition of 
hypergraphs into bipartite hypergraphs:
We develop a beeping procedure that partitions the 
(polynomially-large)
hyperedge set of a hypergraph
$H = (V_H, E_H)$
into
$k = \Theta (\log^2 n)$
disjoint subsets
$F_1 \cup \cdots \cup F_k = E_H$
such that the (sub-)hypergraph
$(V_H, F_i)$
is bipartite in the sense that there exists a vertex subset
$U \subseteq V$
such that
$|U \cap e| = 1$ 
for every
$e \in F_i$.
This procedure turns out to be instrumental in speeding up shortest path constructions under the beeping model.
\end{abstract}

\section{Introduction}
\label{sec:intro}
Constructing a shortest path between given \source and \destination nodes is
among the most fundamental tasks in algorithmic graph theory in general and in
the field of distributed graph algorithms in particular.
In this field, the challenge behind the shortest path task is dominated by the
decentralized nature of the underlying communication network, where in most
distributed computational models, the communication network is assumed to be
identified with (or at least derived from) the input graph (see, e.g.,
\cite{Peleg2000book}).
These computational models differ in various aspects, the most important one
is arguably the communication scheme that determines how information can be
exchanged between the graph nodes.

The distributed computational model that takes the most radical approach in
terms of the communication scheme is the \emph{beeping model}
\cite{CK10, FluryW2010slotted}
that provides an abstraction for networks of extremely simple devices.
In particular, the communication capabilities of each device boil down to two modes of
operation:
\emph{beep}, i.e., emit a signal;
and
\emph{listen}, i.e., detect if a signal is emitted in its (graph) vicinity.
This limited communication scheme means that beeping algorithms may be
implemented in practice by small, cheap, and energy-efficient devices that can
be employed in various applications.
A prominent example is a sensor network, deployed in an airborne manner over a
large terrestrial area, and thus creating an unstructured network composed of a large
number of simple devices.

Despite the extensive literature dedicated to beeping algorithms for various
graph theoretic tasks (see Section~\ref{sec:related-work}), to the best of our knowledge, the
shortest path task has not been studied yet under the beeping model.
The current paper strives to change this situation, advancing the state-of-the-art of the following tasks:
in the simpler case, the goal is to construct a shortest path from a designated \source node to one \destination node.
In the more general case, the goal extends to constructing shortest paths between the \source and multiple \destination
nodes.

Since the structure of the network is unknown and since we do not assume that
the primitive devices have unique IDs, the path construction goal has the
following interpretation:\footnote{%
The algorithms considered in this paper are randomized and the nodes know an
approximation of the graph size, so unique IDs can be  generated.}
Each node holds a binary output and the nodes that output $1$ form a shortest
path $p$ from the \source to the \destination.
We further require that each node along $p$ knows its distance from the
\source.

To be more concrete, consider a network represented as an undirected connected
graph
$G = (V, E)$
over
$n = |V|$
nodes, where one designated node, denoted by
$s \in V$,
is called the \emph{\source}.
Each node is a beeping device and edges correspond to pairs of devices that
can hear each other's beeps.
The execution progresses in discrete rounds so that
(1)
a node that beeps at a certain round cannot listen in the same round;
and
(2)
a node that listens in a round can only distinguish the case that none of its
neighbors beep from the case that some (at least one) neighbors beep.
The nodes are randomized anonymous machines and we assume that they know a
(polynomial) approximation of $n$.\footnote{%
In fact, it suffices for our algorithms that the \source $s$ knows an
approximation of $n$.}

When the execution commences, all nodes are asleep;
in this state, nodes do not perform any computation, nor do they beep.
Some \emph{\destination} nodes, whose set is denoted by
$Y \subseteq V$,
wake up at arbitrary times due to external events (e.g., a triggered sensor).
Other nodes (non-\destinations) wake up once at least one of their neighbors
beeps.
Upon waking up, either due to an external event or hearing a beep, each node
sets its local clock to~$0$ and starts executing a predefined algorithm.
We emphasize that no global clock is assumed, that is, although all nodes
communicate in synchronous rounds, each node has a local definition of ``round
$0$'' (as explained in the sequel, global synchronization is obtained as part
of the algorithm).

In this paper, we develop algorithms for two (families of) tasks related to
the construction of shortest paths from the \source $s$ to the \destinations in $Y$.
To facilitate the exposition, assume for the time being that there is a single
\destination, i.e,
$Y = \{ y \}$,
and consider the task of computing a shortest path $p$ between $s$ and $y$.
Using standard techniques, we can ensure that all nodes are awake and know
their distance from the \source.
Following that, the natural approach would be to construct the desired path
$p$ step-by-step, from $y$ inwards, until $s$ is reached.
Specifically, once we have already constructed a prefix of $p$ that leads from
$y$ to a node $v$ at distance $i$ from $s$, the next step would be to identify
a neighbor $u$ of $v$ whose distance from $s$ is
$i - 1$;
node $u$ is then appointed as the next node along the path $p$ and
the process continues.

The task of identifying such a neighbor $u$ of $v$ under the beeping model
requires some effort as $v$ may have multiple neighbors
$u_{1}, \dots, u_{r}$
at distance
$i - 1$
from $s$, all of which are candidates to form the next node $u$ along~$p$.
The common method to handle this task is to run a competition among the set of
candidates so that each candidate $u_{i}$ picks a random bitstring 
$x_{i} \in \{ 0, 1 \}^{\ell}$
and beeps in a pattern determined by $x_{i}$;
this allows the candidate $u_{i}$ with the lexicographically largest bitstring
$x_{i}$ to win the competition and identify itself as the next node along the
constructed path $p$.

A necessary condition for the success of this method is that the length $\ell$
of the bitstrings, and hence also the length of the beeping patterns, should
be
$\Theta (\log n)$,
leading to a total round complexity of
$\Theta (D \log n)$,
where $D$ denotes the diameter of the communication graph~$G$.
The question that guides our research is whether this round complexity bound
can be improved.
Specifically, can we ``separate'' between the linear dependency on~$D$ and the
(poly)logarithmic dependency on $n$?
We answer this question in the affirmative as stated in the following theorem
(see Theorem~\ref{thm:singlePathConstruction} for the exact statement).

\begin{theorem}[single path, simplified]
\label{thm:main:inf}
Let
$Y \subseteq V$
be a set of \destinations.
There exists an algorithm that constructs a shortest path from the \source
$s$ to an arbitrary \destination
$y \in Y$
in
$O (D \log\log n + \log^{3} n)$
rounds whp.\footnote{%
We say that event $A$ occurs \emph{with high probability}, abbreviated by
\emph{whp}, if
$\Pr(A) \geq 1 - n^{-c}$
for an arbitrarily large constant
$c > 0$.}
\end{theorem}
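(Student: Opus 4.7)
My plan is to decouple the per-step cost of the ``next-predecessor'' competition from $\Theta(\log n)$ down to $O(\log \log n)$ by a polylogarithmic preprocessing phase built on the hypergraph bipartite decomposition advertised in the abstract. First I would wake up all nodes via a standard beep flood and compute, for each $v \in V$, its BFS distance $d(s, v)$ together with a shared global clock; these primitives can be implemented in the beeping model in $O(D + \mathrm{poly}\log n)$ rounds. Armed with the distances, I introduce the \emph{predecessor hypergraph} $H = (V, E_H)$ whose hyperedges are the predecessor sets $N^-(v) = \{u \in N(v) : d(s,u) = d(s,v) - 1\}$, one per non-source node $v$, and I invoke the hypergraph bipartite decomposition procedure on $H$ to partition $E_H$ into $k = \Theta(\log^2 n)$ bipartite subsets $F_1, \ldots, F_k$, each witnessed by a subset $U_i \subseteq V$ that meets every $e \in F_i$ in exactly one vertex. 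The outcome for each non-source $v$ is an index $i_v \in [k]$ with $N^-(v) \in F_{i_v}$, and for each $u \in V$ the list of indices $i$ for which $u \in U_i$. The preprocessing cost is $O(\log^3 n)$, which accounts for the additive term in the target round complexity.

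The path is then constructed backward from the destination $y$. At each extension, the current path tip $v$ broadcasts its index $i_v$ to its predecessors by beeping the $\lceil \log_2 k \rceil = O(\log\log n)$ bits of $i_v$ over $O(\log\log n)$ rounds, interleaving beep/listen slots according to the BFS-layer parity so that signals flow only to the previous layer. By the bipartite property of $F_{i_v}$, the intersection $U_{i_v} \cap N^-(v)$ contains exactly one node $u^\star$; since every $u \in N^-(v)$ locally knows whether it belongs to $U_{i_v}$, the unique $u^\star$ identifies itself, installs itself on the path, and assumes the role of path tip for the next extension. Summing over at most $D$ extensions gives $O(D \log \log n)$ rounds for the path, which combined with the preprocessing yields the claimed $O(D \log\log n + \log^3 n)$ bound.

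The hard part, on which I would spend the bulk of the technical work, is keeping the per-step $O(\log \log n)$ dialogue \emph{localized}: the signal from $v$ must reach only its intended predecessors without being confused with beeps emitted by other active parts of the network, and it must succeed despite the anonymity of nodes, the asynchronous wake-up of destinations, and the high-probability (rather than deterministic) guarantees inherited from the decomposition subroutine. Layer-parity schedules together with a self-synchronizing code for the beeped index $i_v$ should suffice once a global clock is in place, but the careful interplay among the globally scheduled preprocessing, the locally triggered extensions, and the failure probabilities of the underlying primitives is where I expect the real effort to go.
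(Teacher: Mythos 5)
Your overall strategy coincides with the paper's: a wake-up/BFS phase, an $O(\log^3 n)$-round hypergraph bipartite decomposition of the layer-adjacency hypergraphs into $k=\Theta(\log^2 n)$ bipartite parts, and a path-growing phase that spends only $O(\log\log n)$ rounds per hop by beeping the $O(\log k)$-bit color of the current tip and letting the unique witness in $U_{i_v}\cap N^-(v)$ self-identify. However, there is one genuine gap: you grow the path \emph{backward from the destination}, whereas the paper grows it \emph{forward from the source}, and this difference matters precisely when $|Y|>1$. The theorem allows an arbitrary set of destinations, and if several of them launch backward growth processes, multiple tips in the same layer beep their binary-encoded indices simultaneously; a listening predecessor then hears the OR of several codewords and cannot decode any of them. (The paper explicitly identifies this interference as the obstruction in its multi-path variant, which is why that variant falls back to \emph{unary} color encoding and pays $O(D\log^2 n)$.) Your proposal needs either a preliminary step that isolates a single destination --- which itself requires symmetry breaking among anonymous destinations possibly sharing a layer --- or the paper's fix: grow outward from the unique source through the set of nodes lying on some shortest path to a destination (computed during wake-up via a pipelined reverse-beep-wave ``distance gathering'' step), so that exactly one tip is ever active and the forward walk is guaranteed to terminate at a destination.

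A secondary issue, which you partially flag as ``the hard part,'' is that you invoke the decomposition on a single global predecessor hypergraph over all of $V$. The beeping implementation of the decomposition relies on a listening hyperedge-node hearing beeps \emph{only} from the vertices of its own hyperedge; run globally, a node at layer $i$ would also hear vertices at layers $i$ and $i+1$ acting on behalf of other hyperedges, corrupting the ``exactly one beeper'' test. The paper resolves this by decomposing each adjacent layer pair separately and running the instances concurrently only for layer indices congruent modulo $3$, which keeps the total preprocessing at $O(\log^3 n)$. With these two repairs --- forward growth restricted to shortest-path nodes, and per-layer-pair scheduling of the decomposition --- your argument matches the paper's proof.
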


Next, we consider the task of constructing a shortest path between the \source $s$
and \emph{each} \destination
$y \in Y$.
The naive approach would be to perform the construction implied by
Theorem~\ref{thm:main:inf} separately (non-simultaneously) for each
\destination.
However, the complexity of constructing a path node-by-node would then
increase by a multiplicative factor of~$|Y|$ (at the very least).
This leads to the question of whether the dependency on the number $|Y|$ of
\destinations can be avoided.
We answer this question affirmatively as well, albeit at the cost of
increasing the dependence on $n$ in the $D$-term, as stated in the following
theorem (see Theorem~\ref{thm:MultiplePathConstruction} for the complete
statement).

\begin{theorem}[multiple paths,simplified]
\label{thm:multi:inf}
Let
$Y \subseteq V$
be a set of \destinations.
There exists an algorithm that constructs a shortest-path tree from the
\source $s$ to (all \destinations in) $Y$ in
$O (D \log^{2} n + \log^{3} n)$
rounds whp.
\end{theorem}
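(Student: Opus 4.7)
The plan is to reduce the shortest-path-tree construction from $s$ to $Y$ to two sub-tasks: (i) computing BFS distances from $s$ to every node, and (ii) selecting, for every BFS layer $\ell\geq 1$, a parent in layer $\ell-1$ for every layer-$\ell$ node. Using standard beeping primitives (wake-up, approximate counting, and layered flooding from $s$), every node can first learn its BFS layer $d(v)=\mathrm{dist}_G(s,v)$ in $O(D+\operatorname{polylog} n)$ rounds; this cost is absorbed into the final bound.

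The heart of the proof is parent selection, and this is where the hypergraph bipartite decomposition announced in the abstract is invoked. For each layer $\ell\geq 1$, I form the hypergraph $H_\ell$ whose vertex set is $V_{\ell-1}$ and whose hyperedge $e_v$, for each layer-$\ell$ node $v$, is $N(v)\cap V_{\ell-1}$. Applying the decomposition yields, for every $\ell$, a partition $E(H_\ell)=F_1^\ell\cup\cdots\cup F_k^\ell$ with $k=\Theta(\log^2 n)$ and hitting sets $U_i^\ell\subseteq V_{\ell-1}$ satisfying $|U_i^\ell\cap e|=1$ for every $e\in F_i^\ell$. To keep the total preprocessing at $O(\log^3 n)$ rather than $O(D\log^3 n)$, I would group layers by index modulo $3$ and run the decompositions of all layers in the same group in parallel; since layers in such a group differ by at least $3$ in BFS distance, the beepers (at $V_{\ell-1}$) and listeners (at $V_\ell$) of one layer's decomposition are not adjacent in $G$ to those of any other layer's decomposition in the same group, so there is no beep interference.

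Given the decompositions, parent assignment proceeds layer by layer from $\ell=1$ outward, in $k$ short competition phases per layer. In phase $i$ of layer $\ell$, every node in $U_i^\ell$ beeps for $O(1)$ rounds; each layer-$\ell$ node $v$ with $e_v\in F_i^\ell$ hears at least one beep, which by the ``exactly one'' property of the decomposition must originate from its unique $U_i^\ell$-neighbor $u\in V_{\ell-1}$. Node $v$ records ``the phase-$i$ beeper'' as its parent and sends an acknowledgment beep so that $u$ learns it is $v$'s parent in the output tree; each destination in $Y$ then follows the parent pointers back to $s$. Summing $k\cdot O(1)=O(\log^2 n)$ rounds per layer over the at most $D$ layers gives $O(D\log^2 n)$ rounds for parent selection, and thus $O(D\log^2 n+\log^3 n)$ in total, whp.

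The main obstacle I anticipate is implementing the bipartite decomposition over an \emph{implicit} hypergraph on a beeping network: the hyperedges are only known locally to their incident vertices, so the decomposition must proceed through beeping alone and must cope with polynomially many hyperedges within the $O(\log^3 n)$ budget. This is precisely what the paper's main technical lemma delivers; once it is available, correctness and the whp guarantee reduce to a union bound over the $O(D\log^2 n)$ phases, each of which succeeds whp by the decomposition's own guarantees.
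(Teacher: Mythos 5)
Your overall architecture matches the paper's: the same $(i,i-1)$-layered hypergraphs, the same mod-$3$ grouping to run all layer decompositions in parallel within $O(\log^3 n)$ rounds, and $k=\Theta(\log^2 n)$ short slots per layer during construction. However, there is a genuine gap in the construction phase. The task requires outputting only the nodes of a $Y$-spanning shortest-path tree, i.e., the union of the chosen root-to-destination paths, so after your outward parent-selection pass you still have to prune: this is the step you dispatch with ``each destination in $Y$ then follows the parent pointers back to $s$.'' That step is precisely where the difficulty lies. Running it per destination costs $O(|Y|\cdot D)$ rounds, and running it in parallel by having every marked layer-$\ell$ node simply beep fails, because a layer-$(\ell-1)$ node that hears a beep cannot tell whether the beeper is one of its own tree-children or a layer-$\ell$ neighbor whose designated parent is some other node; it would then wrongly mark itself. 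The fix is exactly the paper's mechanism: during the inward pass, a marked node $v$ with $e_v\in F_i^{\ell}$ beeps in the $i$-th of $k$ rounds (a unary encoding of its color), and a node $u$ marks itself iff it hears a beep in a round $i$ with $u\in U_i^{\ell}$; the $|U_i^{\ell}\cap e_v|=1$ property then certifies that every such beep heard by $u$ in slot $i$ comes from a node whose unique designated parent is $u$, and collisions between same-colored beepers sharing the parent $u$ are harmless.

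Once this inward unary-color pass is in place, your outward parent-selection pass (and the acknowledgment beeps) becomes redundant: the decomposition already determines each node's parent implicitly as the unique element of $U_i^{\ell}\cap e_v$, no extra communication needed, and the paper's tree-construction phase consists of a single inward sweep from the destinations toward $s$, growing and merging the paths layer by layer in $O(\log^2 n)$ rounds per layer. So the proposal is not wrong in spirit, but as written the only communication it fully specifies is the part that can be skipped, while the part it leaves implicit is the actual algorithm.
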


\subsection{Our Techniques}
\label{sec:techqniues}

\subparagraph*{The Hypergraph Bipartite Decomposition Problem}
The constructions promised in Theorems~\ref{thm:main:inf} and~\ref{thm:multi:inf} are made possible due to the following combinatorial problem that, on the face of it, does not seem related to shortest path tasks.
We call this problem \emph{hypergraph bipartite decomposition (HBD)} and define it as follows.

A hypergraph
$H = (V_{H}, E_{H})$
is said to be \emph{bipartite} if there exists a vertex subset
$U \subseteq V_{H}$
such that each hyperedge is incident on exactly one vertex of $U$, that is,
$|e \cap U| = 1$
for every
$e \in E_{H}$
(see, e.g., \cite{AharoniK1990extension}).
The goal of the HBD problem is to partition the hyperedge set $E_{H}$ of a
given hypergraph
$H = (V_{H}, E_{H})$
into pairwise disjoint clusters
$F_{1} \cup \cdots \cup F_{k} = E_{H}$
so that the sub-hypergraph
$(V_{H}, F_{i})$
is bipartite for every
$i \in [k]$
with the objective of minimizing the number $k$ of clusters.\footnote{%
We stick to the convention that
$[k] = \{ 1, \dots, k \}$.}\textsuperscript{,}\footnote{%
Note that the HBD notion, as defined in the current paper, is a generalization of the notion of \emph{graph bipartite decomposition}, where the hypergraph
$H = (V_{H}, E_{H})$
is an ordinary graph
(i.e., the rank of each hyperedge
$e \in E_{H}$
is
$|e| = 2$).}


We subsequently encode a solution for the HBD problem on
$H = (V_{H}, E_{H})$
by means of a function
$\Col : E_{H} \rightarrow [k]$
that assigns a \emph{color} $\Col(e)$ to each hyperedge
$e \in E_{H}$
and a function
$\ColSet : V_{H} \rightarrow 2^{[k]}$
that assigns a (possibly empty) color set
$\ColSet(v)$
to each vertex
$v \in V_{H}$.
The functions $\Col(\cdot)$ and $\ColSet(\cdot)$ are subject to the constraint
that for each hyperedge
$e \in E_{H}$,
there exists exactly one incident vertex
$v \in e$
such that
$\Col(e) \in \ColSet(v)$.
Using the aforementioned bipartite sub-hypergraphs terminology, the hyperedges
$e \in E_{H}$
whose color is
$\Col(e) = i$
form the cluster $F_{i}$, whereas the inclusion of color $i$ in the color set
$\ColSet(v)$
of a vertex
$v \in V_{H}$
indicates that $v$ belongs to the vertex subset $U\subseteq V_H$ that realizes the biparticity
of the sub-hypergraph
$(V_{H}, F_{i})$.

The HBD problem plays a pivotal role in the design of our algorithms for the
shortest path construction tasks of Theorems \ref{thm:main:inf} and
\ref{thm:multi:inf}.
To explain this role, let us focus again on the simplified task of
constructing a shortest path $p$ between the \source $s$ and a (single)
\destination $y$.
Recall the aforementioned process that grows the path $p$ from \destination $y$ to \source $s$
step-by-step so that in each step, the path $p$ is extended from a node in
$\Layer_{i}$ to a node in
$\Layer_{i - 1}$,
where $\Layer_{d}$ denotes the set of nodes at distance $d$ from $s$, a.k.a.\
\emph{layer $d$}.
As discussed earlier, assuming that we have already constructed a prefix of
$p$ that leads from $y$ to a node
$v \in \Layer_{i}$
and taking 
$u_{1}, \dots, u_{r}$
to be the neighbors of $v$ in
$\Layer_{i - 1}$,
the bottleneck of the process is the task of identifying one such neighbor
$u_{i}$,
$i \in [r]$.
Indeed, the naive method for this task takes
$\Theta (\log n)$
rounds, which results in
$\Theta (D \log n)$
rounds for the entire process.

This is where the HBD problem comes into play:
In a preprocessing phase, we construct an HBD solution for the
\emph{$(i, i - 1)$-layered hypergraph}
$H = (V_{H}, E_{H})$
defined by setting
$V_{H} = \Layer_{i - 1}$
and
$E_{H} = \{ \neigh_{v} \cap \Layer_{i - 1} \}_{v \in \Layer_{i}}$;
that is, we identify the vertices of~$H$ with the nodes in~$\Layer_{i - 1}$
and introduce a hyperedge for each node
$v \in \Layer_{i}$
that includes all neighbors of $v$ in~$\Layer_{i - 1}$.
In particular, this HBD solution assigns a color
$\Col(v) \in [k]$
to each node
$v \in \Layer_{i}$
and a color set
$\ColSet(u) \in 2^{[k]}$
to each node
$u \in \Layer_{i - 1}$.
The HBD solution for the
$(i, i - 1)$-layered
hypergraph is constructed for each layer
$1 \leq i \leq d_{\max}$,
where
$d_{\max} = \max \{ i \geq 0 \mid \Layer_{i} \neq \emptyset \}$
is the \source's eccentricity in $G$.

The crux of our technique is that once the path growing process reaches node
$v \in \Layer_{i}$,
it is sufficient that $v$ beeps a pattern that encodes $v$'s color
$\Col(v) \in [k]$.
Indeed, the definition of the HBD problem ensures that there is a
\emph{unique} neighbor $u$ of $v$ in
$\Layer_{i - 1}$
with
$\Col(v) \in \ColSet(u)$;
this neighbor $u$ is selected as the next node along the path $p$.
Our solution for the HBD problem uses a polylogarithmic palette size $k$ which
means that the beep pattern that
encodes $\Col(v)$ is of length
$O (\log k) = O (\log\log n)$,
thus providing an exponential improvement over the naive approach and
ultimately leading to a process that grows the entire shortest path $p$ in
$O (D \log\log n)$
rounds.
(In the multiple path task, node $v$ uses a unary encoding of $\Col(v)$ in the
pattern it beeps, hence the larger dependency on $n$.)
Put differently, the colors assigned by the HBD solution serve as ``succinct
locally unique identifiers'' that facilitate an efficient recognition of the
next node in the shortest path $p$.
The key ingredient of our algorithms is cast in the following theorem.

\begin{theorem}[HBD procedure, simplified]
\label{thm:main-ingredient}
There exists a beeping procedure that given a layer
$1 \leq i \leq d_{\max}$,
constructs an HBD solution for the
$(i, i - 1)$-layered
hypergraph with palette size
$k = O (\log^{2} n)$
in
$O (\log^{3} n)$
rounds whp.
In fact, the procedure can be invoked concurrently for all layers
$1 \leq i \leq d_{\max}$,
constructing the corresponding $d_{\max}$ HBD solutions in
$O (\log^{3} n)$
rounds whp.
\end{theorem}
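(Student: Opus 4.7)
The plan is to reduce the HBD construction to a collection of randomized leader-election sub-procedures running on top of a randomly sampled color assignment. Set $k = \Theta(\log^{2} n)$ and write $[k]$ as a disjoint union of $L = \Theta(\log n)$ ``level groups'' $P_{1}, \dots, P_{L}$, each of size $\Theta(\log n)$; informally, level $\ell$ will be used for hyperedges whose representative is chosen via a doubling-probability sample of intensity $\approx 2^{-\ell}$.

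\textbf{Step 1 (Color sampling).} In a preprocessing phase, each $v \in \Layer_{i}$ samples a tentative color $\Col(v)$ uniformly from $[k]$; it remains to determine $\ColSet(u)$ for every $u \in \Layer_{i-1}$. I would run $\Theta(\log n)$ retry phases: any $v$ that does not successfully acquire a representative in one phase re-samples $\Col(v)$ for the next phase.

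\textbf{Step 2 (Beeping sub-procedure for a single color).} I would process the $k$ colors sequentially. For a fixed color $c \in P_{\ell}$, only nodes $v$ with $\Col(v) = c$ are active; each such $v$ wants exactly one of its neighbors in $\Layer_{i-1}$ to admit $c$ into its $\ColSet$. I would implement this with $O(\log n)$ beeping rounds: first $v$ announces ``I am active with color $c$'' by a short beep, then for $\ell' = 0, 1, \dots, \log n$ every active lower neighbor independently beeps with probability $2^{-\ell'}$ while $v$ listens, and finally I run a standard two-round ``unique-beeper verification'' (isolate the candidate, re-beep, and confirm silence from the others) so that $v$ can decide whether exactly one neighbor beeped in round $\ell'$. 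If so, that neighbor adds $c$ to its color set and $v$ marks itself as coloured. Per color this costs $O(\log n)$ rounds, and over all $k$ colors and $\Theta(\log n)$ retry phases I spend $O(\log n) \cdot k \cdot \Theta(\log n) = O(\log^{3} n)$ rounds only if phases and colors share work; the precise book-keeping is that each retry phase costs $O(k) = O(\log^{2} n)$ rounds and there are $O(\log n)$ of them.

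\textbf{Step 3 (Probabilistic analysis).} Fix $v \in \Layer_{i}$ and let $d_{v} = |e_{v}|$. With probability $\Omega(1 / \log n)$, the color $\Col(v)$ falls into the ``correct'' level $\ell^{*}$ with $2^{\ell^{*}} = \Theta(d_{v})$; conditionally, the doubling-probability sample by $v$'s lower neighbors isolates a unique volunteer with constant probability. Simultaneously, because $|P_{\ell^{*}}| = \Theta(\log n)$, a union bound over the neighbors of $v$'s lower neighbors shows that the probability that another active upper node $v'$ sharing a lower neighbor with $v$ chose the same color is at most a small constant, so the unique-beeper verification is not spoiled. Put together, a single phase succeeds for $v$ with probability $\Omega(1 / \log n)$, and after $\Theta(\log n)$ independent retry phases every $v$ succeeds whp.

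\textbf{Step 4 (Concurrency across layers).} To invoke the procedure simultaneously for all $d_{\max}$ layers, partition layer indices by $i \bmod 3$ and time-multiplex the three groups. Since each round involves only beeps between $\Layer_{i}$ and $\Layer_{i-1}$, non-adjacent layer pairs do not interfere, so the overhead is a factor of three.

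\textbf{Main obstacle.} The hardest part will be the conflict analysis in Step 3: because the palette size $k = \Theta(\log^{2} n)$ is independent of the hypergraph degrees, nothing prevents two active upper nodes $v, v'$ from sharing a lower neighbor and the same color, which would spuriously insert $c$ into some $\ColSet(u)$ and violate $|e_{v'} \cap U_{c}| = 1$. Handling this requires (i) the two-tier palette design that matches sampling intensity to $v$'s effective degree and (ii) a verification step in which $v$ detects a conflict (e.g., by hearing a beep in a round where it itself is silent but should be the only active upper node) and abandons the attempt so that the $c$-assignment is revoked before the phase ends. Making this detection robust within the beep/listen primitive---which can only distinguish ``silence'' from ``someone beeped''---is the subtle point, and it will drive the exact constants in the $O(\log^{3} n)$ bound.
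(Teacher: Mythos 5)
There is a genuine gap, and it stems from inverting the direction of the random choice. You have each hyperedge-node $v \in \Layer_{i}$ pick its color $\Col(v)$ first and then run a per-hyperedge leader election among its lower neighbors to decide who joins $\ColSet(\cdot)$. This is what creates the cross-hyperedge conflict you flag as the ``main obstacle'': two upper nodes $v, v'$ with the same color $c$ and overlapping neighborhoods can each elect a different representative, after which $|M_{v'} \cap U_{c}| \geq 2$ and the biparticity constraint fails; worse, a node $u$ that committed $c$ to its color set early cannot be ``revoked'' later without a global per-color cleanup pass. The paper avoids this entirely by reversing the roles: for each color $(i,j)$ (with $i$ indexing $\lfloor \log n \rfloor + 1$ epochs and $j$ indexing $\Theta(\log n)$ iterations per epoch, giving $k = O(\log^{2} n)$), every \emph{vertex} $u \in \Layer_{i-1}$ independently joins $U_{(i,j)}$ with probability $2^{-i}$, irrespective of which hyperedges are adjacent; an uncolored hyperedge $v$ then passively adopts color $(i,j)$ if and only if exactly one of its vertices landed in $U_{(i,j)}$. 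Since $U_{(i,j)}$ is a single global random set and no per-hyperedge election takes place, the constraint ``exactly one incident vertex holds $\Col(v)$'' holds by construction with probability $1$, and the only thing to prove is that each $v$ with $|M_{v}| = r$ gets colored during epoch $\lfloor \log r \rfloor$ whp --- which is a standard isolation argument. Your two-tier palette intuition (levels of intensity $2^{-\ell}$, $\Theta(\log n)$ colors per level) is exactly right, but it must be applied to the vertices, not to the hyperedges.

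Two further quantitative problems would remain even if the conflict issue were patched. First, your bookkeeping does not close: $O(\log n)$ rounds per color, $k = \Theta(\log^{2} n)$ colors processed sequentially, and $\Theta(\log n)$ retry phases gives $O(\log^{4} n)$, and your suggested repair (each retry phase costing only $O(k)$ rounds) contradicts your own Step 2. Second, the two-round ``unique-beeper verification'' cannot be whp-sound: with only beep/silence feedback, two simultaneous volunteers look identical to one with constant probability, and a false positive commits an incorrect color-set entry that retries cannot undo. The paper spends $\Theta(\log n)$ two-round sub-iterations on precisely this test (each candidate beeps in a uniformly random round of each sub-iteration, so two or more candidates collide in every sub-iteration only with polynomially small probability), and this $\Theta(\log n)$ factor --- multiplied by the $O(\log n)$ epochs and $O(\log n)$ iterations --- is exactly where the $O(\log^{3} n)$ bound comes from. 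Your Step 4 (multiplexing layer pairs by $i \bmod 3$) does match the paper.
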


For a hypergraph
$H = (V_{H}, E_{H})$,
we refer to the sum
$|V_{H}| + |E_{H}|$
as the \emph{combinatorial size} of $H$.
As any hypergraph of combinatorial size (at most) $n$ can be ``embedded'' as an
$(i, i - 1)$-layered
hypergraph of some $n$-node graph $G$, Theorem~\ref{thm:main-ingredient} 
means that any such hypergraph can be decomposed into
$k = O (\log^{2} n)$
bipartite hypergraphs.
This turns out to be (combinatorially) tight:
In \cite{ABLP91}, Alon, Bar-Noy, Linial, and Peleg consider the task of broadcasting a message in \emph{radio networks}.
In this setting, if two neighbors of some node communicate at the same round, then that node hears nothing.
Alon et al.\ prove that there exists a bipartite graph $G^{*}$, with $N$ nodes on the left-hand side and $\operatorname{poly}(N)$ on the right-hand side, in which
$\Omega(\log^{2} N)$
rounds are required to broadcast a message $m$, assuming that $m$ is initially known to all nodes on the the left-hand side of $G^*$.

The construction of \cite{ABLP91} implies an
$\Omega(\log^{2} N)$
lower bound for the HBD task on hypergraphs with $N$ vertices and $\operatorname{poly}(N)$ hyperedges.
Indeed, define the hypergraph $H^{*}$ over the left-hand side nodes of $G^{*}$, introducing a hyperedge $e_{u}$ for each right-hand side node $u$ of $G^{*}$ that consists of the left-hand side neighbors of $u$ in $G^{*}$.
If $H^{*}$ can be decomposed into $k$ bipartite hypergraphs $H_1,\ldots,H_k$, then there is a way to broadcast $m$ in the radio-network $G^{*}$ in $k$ rounds:
the subset $U_i$ (left-hand nodes) that realizes the biparticity of~$H_i$ transmits in round~$i$.
Since each hyperedge $e_u\in H_i$ intersects only a single node in~$U_i$, the right-hand node~$u$ learns the message~$m$ in round~$i$. 
Therefore, the HBD problem on $H^{*}$ requires a palette of size
$k = \Omega(\log^{2} N)$.

Theorem~\ref{thm:main-ingredient} and the lower bound of~\cite{ABLP91} yield the following corollary.\footnote{%
We learned about the lower bound~\cite{ABLP91} and the relation between broadcast in radio networks and the HBD task after a preliminary version of this work was made available online.
In hindsight, a $k$~round broadcast scheme in a bipartite radio network implies a $k$-color decomposition of the induced hypergraph.
In particular, the algorithm of Bar-Yehuda, Goldreich, and Itai~\cite{BGI91} gives a broadcast scheme for bipartite radio networks in
$O(\log^{2} n)$
rounds.
Their scheme takes a decay approach (slightly different from the decay approach we adopt in Algorithm~\ref{alg:HBD-abstract}) that guarantees a round with a single transmitter for each receiving node. 
If, similar to our coloring method in Algorithm~\ref{alg:HBD-abstract}, one assigns transmitters with colors according to their transmission rounds and assigns each receiver a color that matches the first round in which a message was received, one obtains an HBD decomposition of the hypergraph defined by the transmitters as nodes and receivers as hyperedges.
It is important to point out that the algorithm of \cite{BGI91} is designed to work under the radio network model and implementing it under the (seemingly harsher) beeping model would require additional machinery, see Section~\ref{sec:preprocessing}.} 

\begin{corollary}
\label{cor:existential-hbd-upper-bound}
Any hypergraph of combinatorial size $n$ can be decomposed into
$k = O (\log^{2} n)$
bipartite hypergraphs and this is existentially tight.\footnote{%
The upper bound promised in Corollary~\ref{cor:existential-hbd-upper-bound} can, in fact, be refined so that the HBD palette size needed for an arbitrary hypergraph
$H = (V_{H}, E_{H})$
is
$k = O \left( \log |V_{H}| \cdot \log |E_{H}| \right)$.}
\end{corollary}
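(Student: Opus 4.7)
The plan is to prove the two directions of the corollary by invoking the ingredients already isolated in the excerpt, namely Theorem~\ref{thm:main-ingredient} for the upper bound and the radio-broadcast lower bound of~\cite{ABLP91} for the tightness.

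For the upper bound, I would start with an arbitrary hypergraph $H = (V_{H}, E_{H})$ of combinatorial size at most $n$, and exhibit an $n$-node graph $G$ in which $H$ appears verbatim as an $(i, i-1)$-layered hypergraph. The simplest realization is to take a \source $s$, place a node in $\Layer_{1}$ for every vertex of $V_{H}$ and connect it to $s$, and place a node in $\Layer_{2}$ for every hyperedge $e \in E_{H}$ joined to precisely the $\Layer_{1}$-nodes corresponding to vertices of $e$. Since shortest-path distances from $s$ in this construction are exactly $0, 1, 2$, the $(2, 1)$-layered hypergraph of $G$ is isomorphic to $H$, and the total number of nodes is $1 + |V_{H}| + |E_{H}| = O(n)$. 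Theorem~\ref{thm:main-ingredient} guarantees that a beeping procedure constructs, with positive probability (in fact whp), an HBD solution for this layered hypergraph with palette size $k = O(\log^{2} n)$. Since \emph{some} run succeeds, the existence of an HBD decomposition of $H$ with $k = O(\log^{2} n)$ colors follows combinatorially, and the upper bound of the corollary is established.

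For the lower bound, I would follow the reduction already sketched after Corollary~\ref{cor:existential-hbd-upper-bound}'s statement but write it out cleanly. Take the bipartite graph $G^{*}$ of \cite{ABLP91} with $N$ left-hand nodes, $\mathrm{poly}(N)$ right-hand nodes, and a broadcast lower bound of $\Omega(\log^{2} N)$ in the radio model. Define the hypergraph $H^{*} = (V_{H^{*}}, E_{H^{*}})$ by letting $V_{H^{*}}$ be the left-hand nodes of $G^{*}$ and introducing a hyperedge $e_{u}$ for every right-hand node $u$, consisting of the left-hand neighbors of $u$. The combinatorial size of $H^{*}$ is $\mathrm{poly}(N)$, so setting $n = |V_{H^{*}}| + |E_{H^{*}}|$ yields $\log N = \Theta(\log n)$. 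Suppose, towards a contradiction, that $E_{H^{*}} = F_{1} \cup \cdots \cup F_{k}$ is a bipartite decomposition with $k = o(\log^{2} n)$ colors, witnessed by subsets $U_{1}, \dots, U_{k} \subseteq V_{H^{*}}$. Then the radio-network broadcast schedule in which every node of $U_{i}$ transmits in round $i$ delivers the message to every right-hand node: for any $u$, the hyperedge $e_{u}$ lies in some $F_{i}$ and satisfies $|e_{u} \cap U_{i}| = 1$, so $u$ hears exactly one transmitter in round $i$. This contradicts the $\Omega(\log^{2} N)$ lower bound, forcing $k = \Omega(\log^{2} n)$.

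The only genuinely delicate step is the embedding in the upper-bound direction: one must make sure that the embedding preserves exactly the hyperedges of $H$ without creating spurious incidences and without blowing up the node count beyond $O(n)$. The two-layer construction above accomplishes both, since shortest-path distances are fixed by the bipartition of $G$ and the layered hypergraph depends only on adjacencies between $\Layer_{2}$ and $\Layer_{1}$. After that, the upper bound is a direct invocation of Theorem~\ref{thm:main-ingredient}, and the lower bound is the clean reduction to radio broadcast already indicated in the excerpt.
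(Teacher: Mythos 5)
Your proposal is correct and follows essentially the same route as the paper: the upper bound embeds $H$ as an $(i,i-1)$-layered hypergraph (you make explicit the two-layer graph with a \source, $\Layer_1 = V_H$, $\Layer_2 = E_H$ that the paper only alludes to) and then invokes Theorem~\ref{thm:main-ingredient} as an existence argument, while the lower bound is exactly the paper's reduction from the radio-broadcast bound of~\cite{ABLP91}. No substantive difference or gap.
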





\subparagraph*{A Framework for the Construction of Shortest Paths}
Towards the design of the algorithms promised in Theorems \ref{thm:main:inf} and \ref{thm:multi:inf},
we develop a  framework that supports solving various different shortest path related tasks.
Our framework is composed of three phases:
a \emph{wake-up phase};
a \emph{preprocessing phase};
and
a \emph{path construction phase}.
The wake-up and preprocessing phases are executed ``offline'' for the purpose of constructing the infrastructure that subsequently enables running the third (path construction) phase more efficiently.
In a bird's-eye view, the three phases are as follows (see Section~\ref{sec:statementAndOverview} for an extended overview):
\begin{enumerate}[(i)]

\item 
The goal of the wake-up phase is to wake-up all nodes and set up the parameters required to solve the HBD problem.
In particular, using relatively standard techniques, we guarantee that upon completion of the wake-up phase, each node
$v \in V$
knows to which layer $\Layer_{i}$ it belongs as well as the \source's eccentricity $d_{\max}$.
This phase further guarantees that the nodes obtain globally-synchronized clocks (despite their unsynchronized wake-up).
We elaborate on the wake-up phase in Section~\ref{sec:wakeup}.

\item
The preprocessing phase is dedicated to the solution of the HBD instances defined over the
$(i, i - 1)$-layered
hypergraphs and constitutes the heart of our algorithm(s).
Upon completion of this phase, each node
$v \in \Layer_{i}$
holds the $\Col(v)$ and $\ColSet(v)$ variables that facilitate expedited shortest path growing processes.
To serve different tasks in the path construction phase, we actually compute two sets of these variables, one that supports growing shortest paths \emph{inwards}, from the \destinations to the \source (corresponding to the HBD solution for the
$(i, i - 1)$-layered
hypergraphs as discussed previously), and one that supports growing shortest paths \emph{outwards}, from the \source to the \destinations (corresponding to the HBD solution for the
$(i - 1, i)$-layered
hypergraphs as discussed in the sequel).
We elaborate on the preprocessing phase in Section~\ref{sec:preprocessing}.

\item
The path construction phase is responsible for solving specific tasks related to the construction of shortest paths between the \source and the \destinations based on the infrastructure built in the preprocessing phase.
In this paper, we focus on the task of constructing a (single) shortest path to one \destination (a generalization of Theorem~\ref{thm:main:inf}) and the task of constructing a shortest-path tree to multiple \destinations (a generalization of Theorem~\ref{thm:multi:inf});
we elaborate on those in Sections \ref{sec:single} and \ref{sec:multiple}, respectively.
\end{enumerate}
The infrastructure constructed during the preprocessing phase (based on the wake-up phase) seems fairly useful and we believe that it may be applicable for efficient solutions of other shortest path related tasks.


\subsection{Additional Related Work}
\label{sec:related-work}
The beeping model, defined by Cornejo and Kuhn~\cite{CK10} (c.f.~\cite{FluryW2010slotted}), has been extensively explored in the literature.
Previous work considered solving various tasks in beeping networks, most notably
local symmetry breaking tasks (i.e., MIS and coloring)~\cite{AABHBB11,AABCHK13,JSX16,HL16,CMT17,BBDK18,CMRZ19},
global symmetry breaking tasks (i.e., leader election)~\cite{GH13,FSW14,DBB18},
and
broadcasting of messages~\cite{GH13,HP15,HP16,CD19bc,BBDD19}.
\looseness=-1

To the best of our knowledge, no prior work in the beeping model dealt with the fundamental tasks of constructing shortest paths.
The works~\cite{BBDK18,AGL20} show how to simulate any CONGEST algorithm over a beeping network by first performing a 2-hop coloring and then communicating in each neighborhood using a time-division technique (TDMA) based on the given colors, so that transmissions do not interfere. 
These simulations, however, blow up the round complexity by an $\Omega(\Delta)$ factor to support the TDMA; $\Delta$ is the maximal degree in~$G$. 
Therefore, even though constructing a shortest path in the CONGEST model takes $\Theta(D)$~rounds, a direct simulation of shortest path construction in the beeping model greatly surpasses the complexity we obtain in the current paper.  
We are also not familiar with any works on the algorithmic aspects of the HBD problem, particularly in distributed settings.

Most of the beeping literature so far has assumed synchronous networks where all nodes begin their execution in the same ``round 0''.
Some works (e.g., \cite{AABCHK13,GH13,HP15}) constructed schemes that lift this assumption, allowing some nodes to wake up spontaneously, whereas other nodes wake up when their neighbors beep.
The works~\cite{CK10,DBB20} considered the harsher setting where nodes wake up in arbitrary, adversarially-chosen rounds, and do not necessarily wake up when their neighbors beep.
The works \cite{AABCHK13,FSW14,DBB18,HMP20} 
allow simulating beeping algorithms that assume globally-synchronized clocks over beeping networks with unsynchronized wake-up. Under the assumptions that nodes wake up upon the reception of a beep, one can show that the local clock of any two neighbors $(u,v)\in E$ differ by at most one. Then, each round of the algorithm that assumes globally-synchronized clocks, can be simulated via three rounds: the middle round of every such triplet is overlapping with the corresponding triplet of each neighbor. We shall use this technique in our wake-up phase in Section~\ref{sec:wakeup}.

%
%



\section{Preliminaries}
\label{sec:prelim}
\subparagraph*{Notations}
Given a graph $G=(V,E)$ and a node $v\in V$, we let $\neigh_v =\{u \mid (u,v)\in E \}$ be the neighborhood of~$v$. For any two nodes $u,v\in V$, we let $d(u,v)$ denote their distance, i.e., the length of the shortest path connecting $u$ and~$v$. All logarithms are taken to the base 2 unless otherwise mentioned. 

\subparagraph*{Model and Setting}
We assume the standard beeping model~\cite{CK10}, where a network of $n$ devices is abstracted as the connected graph $G=(V,E)$, where each device is identified with a node $v\in V$.
The computation proceeds in synchronous rounds, where in each round each $v\in V$ can either \emph{beep} or \emph{listen} (but not both). If a node~$v$ listens in a given round it will hear a \emph{beep} only if one of the nodes in its neighborhood $\neigh_v$ beeps in this round; otherwise, $v$ hears a \emph{silence}. The node~$v$ cannot distinguish the case where multiple of its neighbors beep from the case where only a single neighbor beeps. If $v$ beeps in a given round, it cannot tell whether any of its neighboring nodes also beep in that same round.

At the onset, all devices are in a \emph{sleeping} state. 
If some node beeps at some round, all its sleeping neighbors wake up, set their local round-counter to~$0$, and start executing their predefined algorithm (e.g., they start to listen or beep in the following round). Note that the nodes do not possess a global clock and the notion of round number is local. However, for analysis purposes, we denote by \emph{global round} the round number according to the \emph{first} node that wakes up.

Nodes are assumed to know a bound on~$n=|V|$.
We assume that each node has access to an (independent) random string. We do not assume the nodes have IDs, however, IDs can be generated using the random string, with high probability. The nodes receive no inputs. Yet, one distinguished node $s\in V$ is \emph{the \source} and all other nodes know that they are not the \source.
We refer to the set of nodes at distance $i$ from the \source{} $s$ as \emph{(distance) layer $i$}, denoted by
$\Layer_{i} = \{ v \in V \mid \Distance(s, v) = i \}$.

\subparagraph*{Beep Waves} Beep waves~\cite{GH13, CD19bc} are a basic communication primitive in the beeping model, which allows a single node to broadcast a message throughout the network, utilizing the other nodes as relays.
Informally, if some node $v$ starts a beep wave, it beeps in some round~$r$, which causes its neighbors to beep in round~$r+1$, its distance-2 neighbors to beep in round $r+2$, and so on, until the beep wave reaches the entire network. Broadcasting a message~$m$ is achieved by initiating $|m|$ consecutive\footnote{To avoid interference, a node listens in some round~$r$, relays its beep in round $r+1$ and ignores round $r+2$, since its neighbors are relaying the beep then. The node listens again in round $r+3$ which belongs to the following beep wave.} beep waves, where the $i$-th wave behaves as above if $m_i=1$ or is silent (no beeps at all) if~$m_i=0$.

Beep waves can also signal events in the network. In this case, all nodes are set to listen, and $v$ initiates a single beep wave when some event happens. Then, all the nodes learn about that event when the beep wave propagates to them.
Note that once the beep wave has passed some node~$u$, no further beeps that stem from the same beep wave are heard by~$u$. Hence, any future beeps will be correctly interpreted by~$u$ as the next beep wave. Alternatively, $u$ can (locally) terminate or transition to a new state once it hears the beep wave that indicates termination or transitioning into new state. Thus, we obtain a mechanism for synchronizing the network. 

\subparagraph*{Reverse Beep Waves} Reverse beep waves, unlike normal beep waves, only propagate inwards to the \source (or more generally, some designated root). The primitive assumes nodes are globally-synchronized, have already computed their distance to the \source and know some 
bound on the \source's eccentricity, $d_{\max}$ .
%
Nodes separate rounds from 1 to $3d_{\max}$ into triplets of (three consecutive) rounds, referred to as rounds 0,~1 and 2 of that triplet. 
Consider some node $v \in \Layer_i$ for some $i \in [d_{\max}]$. Within each triplet~$r$, the node~$v$ can initiate a reverse beep wave due to an external event. Otherwise (in triplet~$r$), $v$ either relays a reverse beep wave or listens for a reverse beep wave. 
In the latter case, if $v$ hears a  beep
(i.e., issued by its neighbors in $\Layer_{i+1}$)
it
relays the beep in triplet $r+1$, which will then be relayed by $\Layer_{i-1}$, and so on.
Similar to ``standard'' beep waves, one can initiate several consecutive reverse beep waves (separated by three triplets at every node, to avoid interference) which will move separately one layer inwards per triplet without any cross-interference.

It remains to describe how nodes relay reverse beep waves.
If a node $v \in \Layer_i$ initiates 
a beep wave in triplet $r$, then $v$ beeps in round $i \;\mathrm{mod}\; 3 $ of the triplet $r$ and stays idle for the other two rounds of the triplet. 
Otherwise, $v$ listens during triplet~$r$, namely, in all three rounds of the triplet. 
We say that $v$ ``hears'' a reverse beep wave if it hears a beep in round $(i+1) \;\mathrm{mod}\; 3$ of triplet~$r$. Note that only nodes in $\Layer_{i+1}$, among neighbors of $v$, beep in round $(i+1) \;\mathrm{mod}\; 3$ of any triplet, since any neighbor of $v$ must belong in $\Layer_{i-1} \cup \Layer_{i} \cup \Layer_{i+1}$.
To relay the beep heard in triplet~$r$, the node $v\in \Layer_{i}$ beeps in round $i \;\mathrm{mod}\; 3 $ of the triplet $r+1$.

\section{Problem Statement and Algorithm Overview}
\label{sec:statementAndOverview}
Assume an arbitrary network
$G = (V, E)$
in which one node
$s \in V$
is designated as the \source.
At the onset, all nodes are asleep. 
The nodes in a set
$Y = \{ y_{1}, \dots, y_{t} \} \subseteq V$,
referred to as the \emph{\destinations}, experience external events that wake them up (unless already awake), possibly at different times.
We assume that each node
$y_{i} \in Y$
knows that it is a \destination even if it wakes up upon hearing a beep (rather than due to an external event). 
This assumption prevents the edge-case in which some nodes learn that they are \destinations only in late stages of the computation, e.g., after the construction terminates.
We further assume that $y_{1}$ wakes up first and $y_{t}$ last;
this assumption is without loss of generality as the \destinations do not necessarily know whether they are $y_{1}$, $y_{t}$, or neither.

In this paper, we focus on two computational tasks related to the construction of shortest paths between $s$ and $Y$.
Both tasks are specified by means of a \emph{distance policy}
$\pi : 2^{[n]} \rightarrow 2^{[n]}$
that gets the set
$I = \{ 1 \leq i \leq n \mid Y \cap \Layer_{i} \neq \emptyset \}$
of \emph{\destination occupied layer indices} and returns a 
set
$J = \pi(I) \subseteq I$
of \emph{target layer indices}.
For example, the set $J$ can include the minimum \destination occupied layer index or all prime \destination occupied layer indices or all the \destinations, etc.
Note that we think of the distance policy $\pi$ as a private information of the \source and the rest of the nodes can be oblivious to it.
Let
\[
Y_J = \bigcup_{j \in J} Y \cap \Layer_{j}
\]
be the set of all \destinations that belong to the target layers.

In the \emph{single shortest path construction} task, the goal is to construct a shortest path between the \source $s$ and an arbitrary \destination in~$Y_J$. 
%
The output of the single shortest path construction task is provided by means of a binary variable
$z_{v} \in \{ 0, 1 \}$
returned by each node
$v \in V$
when the execution terminates.
These variables are subject to the following constraint:
there exists a shortest path~$p$ between $s$ and an arbitrary \destination in $Y_J$ such that
$z_{v} = 1$
if and only if $v$ is included in~$p$.

The second task we solve is the \emph{shortest-path tree construction} task.
Here, we wish to construct $|Y_J|$ shortest paths, one between $s$ and each one of the \destinations in $Y_J$, so that the union of the shortest paths forms a tree, namely, a \emph{$Y_J$-spanning shortest-path tree rooted at~$s$}.
The output of the shortest-path tree construction task is also provided by means of a binary variable
$z_{v} \in \{ 0, 1 \}$
returned by each node
$v \in V$
when the execution terminates.
These variables are subject to the following constraint:
there exists a $Y_J$-spanning shortest-path tree~$T$ rooted at $s$ such that
$z_{v} = 1$
if and only if $v$ is included in~$T$.

In both the single path and the shortest-path tree construction tasks, the \emph{computation time (complexity)} of the construction is defined to be the number of rounds from the time $y_1$ wakes up until all the nodes return their output and terminate.

\subsection{Scheme Overview}
As discussed in Section~\ref{sec:intro}, our algorithms are composed of three phases, which we briefly describe here.

\paragraph*{The Wake-Up Phase}
The execution commences once $y_1$ wakes up and the objective of the wake-up phase is to wake up the entire network and compute several  parameters required for the next two phases.
Specifically, during the wake-up phase the nodes learn their distance from the \source as well as the \source's eccentricity
$d_{\max} = \max \{ d(v, s) \mid v \in V \}$.
This is performed in a BFS-like manner, where the \source initiates the BFS by issuing a single beep, which is then relayed throughout the network.
The node~$v$ can retrieve its distance from~$s$ from the
round number in which the relayed beep wave is received at~$v$.
Using this mechanism, the nodes also synchronize their clocks so that they all agree on the current round number despite their unsynchronized wake-up (a.k.a.\ having \emph{globally synchronized clocks}).

Following that, we invoke a \emph{distance gathering scheme} that allows all nodes to gather crucial information regarding the \destination set $Y$.
To this end, for each
$1 \leq i \leq d_{\max}$,
let
\[
\SP_{i}
\, = \,
\left\{ v \in V \mid \Distance(s, v) + \Distance(v, y) = i \text{ for some } y \in Y \cap \Layer_{i} \right\}
\]
be the set of nodes that belong to \emph{some} shortest path between~$s$ and $y$ for some (at least one) \destination
$y \in Y \cap \Layer_{i}$.
The distance gathering scheme runs for
$O (D)$
rounds and when it terminates, each node
$v \in V$
knows whether
$v \in \SP_{i}$
for each
$1 \leq i \leq d_{\max}$
(recall that the nodes already know $d_{\max}$).

To implement the distance gathering scheme, a \destination that belongs to layer $\Layer_{i}$ initiates a reverse beep wave in a specific round chosen as a function of $i$ so that the reverse beep waves of \destinations in consecutive layers travel towards the \source in consecutive rounds (up to an agreed upon spacing that prevents interference).
This allows all nodes to identify whether they belong to
$\SP_{i}$
for each
$1 \leq i \leq d_{\max}$,
thus fulfilling the scheme's goal.

Since 
$s \in \SP_{i}$
if and only if
$|Y \cap \Layer_{i}| \neq \emptyset$,
it follows that the distance gathering scheme allows the \source $s$ to learn the set
$I = \{ 1 \leq i \leq n \mid Y \cap \Layer_{i} \neq \emptyset \}$
of \destination occupied layer indices.
The \source can now apply the distance policy $\pi$ to $I$ and obtain the set
$J = \pi(I)$
of target layer indices.
The set $J$ is then disseminated throughout the network in
$O (D)$
rounds by a standard broadcast through beep waves.

Following the dissemination of the set $J$ of target layer indices, each node
$v \in V$
knows whether it is included in the set
\[
\SP_{J}
\, = \,
\bigcup_{j \in J} \SP_{j}
\]
of nodes that belong to a shortest path to some 
\destination in~$Y_J$.
The nodes
$v \in V \setminus \SP_{J}$
set their output bit $z_{v}$ to $0$ and turn themselves off as they do not participate in the next two phases of the algorithm.
The following theorem is established in Section~\ref{sec:wakeup}.

\begin{restatable}{theorem}{wakeupThm}
\label{thm:wakeup}
The wake-up phase takes
$O (D)$
rounds, after which each node
$v \in V$
knows
(1)
$\Distance(v, s)$;
(2)
$d_{\max}$;
(3) $J$;
and
(4)
whether
$v \in \SP_{J}$.
Moreover, all nodes complete the wake-up phase in the same (global) round, thus from that point on, the nodes have globally-synchronized clocks.
\end{restatable}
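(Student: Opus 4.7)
The plan is to realize the wake-up phase as a pipeline of four sub-phases, each running in $O(D)$ rounds: (i)~initial wake-up, (ii)~distance computation and clock synchronization from~$s$, (iii)~eccentricity estimation, and (iv)~distance gathering followed by broadcast of~$J$.

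For sub-phase~(i), $y_{1}$ initiates a beep wave upon waking up; by the beep-wave primitive of Section~\ref{sec:prelim}, every node is woken up within $O(D)$ rounds, and any two neighbors that wake up through each other's beeps have local clocks differing by at most one. To reconcile this slack, I would simulate every subsequent protocol round by a triple of real rounds as in~\cite{AABCHK13,FSW14,DBB18,HMP20}, so that for every edge there is a real round lying inside the simulated round of both endpoints. For sub-phase~(ii), once $s$ is awake it waits a fixed local interval (long enough for $y_{1}$'s wave to subside) and then initiates its own BFS beep wave; each node $v$ reads $\Distance(v,s)$ off the simulated round in which this wave first arrives, and simultaneously learns the offset between its local clock and $s$'s clock, thereby anchoring a globally synchronized clock.

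For sub-phase~(iii), I would have each $v \in \Layer_{i}$ reflect the BFS wave in a layer-dependent slot scheduled so that the reflection originating at layer $i$ arrives at $s$ at the globally known round $r_{0}+2i$. The source identifies $d_{\max}$ as the largest $i$ for which it hears such a reflection (detectable after a couple of empty slots), and then broadcasts $d_{\max}$ in unary via $d_{\max}$ pipelined beep waves, which takes $O(D + d_{\max}) = O(D)$ rounds. After this, every node knows~$d_{\max}$, which is precisely what the reverse-beep-wave primitive needs.

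For sub-phase~(iv), each destination $y \in Y \cap \Layer_{i}$ initiates a reverse beep wave at a globally known triplet $T_{i}$, with the triplets $T_{1},\ldots,T_{d_{\max}}$ spaced by a small constant to avoid cross-layer interference. By the reverse-beep-wave analysis, a node $v \in \Layer_{\ell}$ hears the $i$-th wave at triplet $T_{i}+(i-\ell)$ if and only if $v$ lies on a shortest $s$-to-$y$ path for some $y \in Y \cap \Layer_{i}$, i.e., $v \in \SP_{i}$; in particular $s$ learns $I$ and computes $J = \pi(I)$. The source then disseminates $J$ as a $d_{\max}$-bit indicator vector via $O(d_{\max}) = O(D)$ pipelined beep waves, after which every node decides whether it belongs to $\SP_{J}$ and either sets $z_{v}=0$ and turns off, or proceeds to the preprocessing phase. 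Because every action after sub-phase~(ii) is scheduled on the globally synchronized clock, all nodes terminate this phase in the same global round, as required. The main obstacle, I expect, is reconciling the staggered wake-up with the tight synchronization demanded by the reverse-beep-wave primitive; the triple-simulation trick together with the fixed inter-subphase offsets handle this cleanly.
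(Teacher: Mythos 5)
Your four sub-phases line up almost exactly with the paper's own decomposition: a wake-up beep wave, a BFS wave from $s$ that yields distances and (via the triple-round simulation) a global clock, reflected waves that let $s$ compute $d_{\max}$ and broadcast it (the paper packages the distance and eccentricity steps together as a simulation of the \textsc{EstimateDiameter} procedure of~\cite{CD19bc}), then pipelined reverse beep waves for the sets $\SP_i$ and a broadcast of the indicator vector of~$J$. The one detail that would fail as written is the sub-phase~(iii) schedule in which the reflection of layer $i$ reaches $s$ at round $r_0+2i$: with a spacing of only two rounds, a node $v\in\Layer_j$ listening for reflection $i$ (i.e., for a beep from $\Layer_{j+1}$ at round $r_0+2i-j-1$) hears, in that very round, its $\Layer_{j-1}$ neighbors relaying reflection $i-1$ (they beep at $r_0+2(i-1)-(j-1)=r_0+2i-j-1$); since beeps carry no direction, $v$ cannot tell the two apart, so it relays phantom reflections, these phantoms beget further phantoms, and $s$ may keep hearing echoes past round $r_0+2d_{\max}$ and overestimate the eccentricity. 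The paper's fix is precisely the triplet discipline you already invoke for sub-phase~(iv): consecutive reflections are spaced three (simulated) rounds apart and layer $\ell$ beeps only in sub-round $\ell\bmod 3$ of its triplet, so a beep heard in sub-round $(j+1)\bmod 3$ is unambiguously attributable to $\Layer_{j+1}$. With that correction, your argument coincides with the paper's proof.
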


\paragraph*{The Preprocessing Phase}
The preprocessing phase is invoked by the nodes in~$\SP_{J}$ once the wake-up phase completes.
As discussed in Section \ref{sec:intro}, the objective of the preprocessing phase is to solve the HBD instances defined over the pairs of adjacent layers $\Layer_{i}$ and
$\Layer_{i + 1}$
for each
$0 \leq i \leq d_{\max} - 1$.
For every pair of adjacent layers, we solve two instances of HBD:
one instance with the nodes in
$\Layer_{i}$
playing the role of the vertices and the nodes in
$\Layer_{i + 1}$
playing the role of the hyperedges (the ``inwards'' direction) and one instance with exchanged roles (the ``outwards'' direction). 

The crucial point is that we can parallelize the invocations of the HBD procedure on the different layer pairs as long as they are spaced apart so that interference is avoided.
This results in a round complexity which is independent of the network's diameter~$D$.
The following theorem is established in Section~\ref{sec:preprocessing}.

\begin{restatable}{theorem}{preprocessingThm}
\label{thm:preprocessing}
The preprocessing phase takes
$O (\log^{3} n)$
rounds.
Upon completion of this phase, each node
$v \in \SP_{J}$
holds the variables
$\Col_{\Out}(v), \Col_{\In}(v) \in [k] \cup \{ \bot \}$
and
$\ColSet_{\Out}(v), \ColSet_{\In}(v) \in 2^{[k]}$,
where
$k = O (\log^{2} n)$.
Assuming that
$v \in \Layer_{i}$
for
$0 \leq i \leq d_{\max}$,
the following conditions are satisfied whp:
\begin{enumerate}
    
    \item
    $\Col_{\Out}(v) = \bot$
    if and only if
    $\SP_{J} \cap\, \neigh_{v} \cap \Layer_{i + 1} = \emptyset$,
    i.e., $v$ has no $\SP_{J}$ neighbor in layer $i + 1$;
    
    \item
    $\Col_{\In}(v) = \bot$
    if and only if
    $i = 0$,
    i.e.,
    $v = s$;
    
    \item
    if
    $\Col_{\Out}(v) \in [k]$,
    then there exists exactly one node
    $u \in \SP_{J} \cap\, \neigh_{v} \cap \Layer_{i + 1}$
    such that
    $\Col_{\Out}(v) \in \ColSet_{\Out}(u)$;
    and
    
    \item
    if
    $\Col_{\In}(v) \in [k]$,
    then there exists exactly one node
    $u \in \SP_{J} \cap\, \neigh_{v} \cap \Layer_{i - 1}$
    such that
    $\Col_{\In}(v) \in \ColSet_{\In}(u)$.
    
\end{enumerate}
\end{restatable}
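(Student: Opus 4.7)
The plan is to deduce Theorem~\ref{thm:preprocessing} from Theorem~\ref{thm:main-ingredient} by two parallel invocations of the HBD procedure, one per direction, each invoked concurrently over all layers. By Theorem~\ref{thm:wakeup}, when the preprocessing phase begins all nodes of $\SP_J$ hold synchronized clocks, know their layer index, know $d_{\max}$, and know whether they belong to $\SP_J$; nodes outside $\SP_J$ have already set $z_v = 0$ and switched off, so they produce no beeps during preprocessing and cannot interfere. This makes it safe to view the ``active'' subgraph as the one induced by $\SP_J$, and to define the layered hypergraphs on this subgraph alone.

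\textbf{Inward direction.} For each $1 \le i \le d_{\max}$, I would consider the $(i, i-1)$-layered hypergraph restricted to $\SP_J$: its vertex set is $\SP_J \cap \Layer_{i-1}$, and for every $v \in \SP_J \cap \Layer_i$ there is a hyperedge $\neigh_v \cap \SP_J \cap \Layer_{i-1}$, which is nonempty by the definition of $\SP_J$. Invoking the procedure of Theorem~\ref{thm:main-ingredient} concurrently on all layers yields, in $O(\log^3 n)$ rounds whp, a color $\Col_\In(v) \in [k]$ for every $v \in \SP_J \cap \Layer_i$ with $i \ge 1$ and a color set $\ColSet_\In(u) \subseteq [k]$ for every $u \in \SP_J \cap \Layer_{i-1}$, such that exactly one neighbor $u$ of $v$ in $\SP_J \cap \Layer_{i-1}$ satisfies $\Col_\In(v) \in \ColSet_\In(u)$. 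Setting $\Col_\In(s) := \bot$ establishes Conditions~2 and~4.

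\textbf{Outward direction.} Applying the same procedure to the ``reversed'' $(i-1, i)$-layered hypergraphs restricted to $\SP_J$ (vertex set $\SP_J \cap \Layer_i$, one hyperedge $\neigh_v \cap \SP_J \cap \Layer_i$ per $v \in \SP_J \cap \Layer_{i-1}$) takes an additional $O(\log^3 n)$ rounds whp and yields $\Col_\Out(v)$ and $\ColSet_\Out(u)$ with the analogous guarantee. A node $v \in \SP_J \cap \Layer_i$ whose outward neighborhood $\SP_J \cap \neigh_v \cap \Layer_{i+1}$ is empty does not participate as a hyperedge in any such hypergraph, and for it we set $\Col_\Out(v) := \bot$; this is precisely Condition~1, and Condition~3 then follows from the bipartite decomposition guarantee. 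Running the two invocations sequentially gives a total round complexity of $O(\log^3 n)$, as required.

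The only real obstacle is verifying that the two concurrent invocations actually meet the hypotheses of Theorem~\ref{thm:main-ingredient}: nodes must agree on which invocation is active, on their role (vertex versus hyperedge), and on the start round of each ``layer-$i$ instance''. All three are ensured by the globally-synchronized clocks and the knowledge of $\Distance(v,s)$ and $d_{\max}$ obtained from Theorem~\ref{thm:wakeup}. A minor but important point is that a node $v \in \SP_J \cap \Layer_i$ simultaneously acts as a hyperedge in the layer-$i$ instance and as a vertex in the layer-$(i{+}1)$ instance; this is already covered by the ``all layers concurrently'' part of Theorem~\ref{thm:main-ingredient}, which is designed to tolerate exactly this overlap, so no further scheduling is needed.
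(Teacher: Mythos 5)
Your high-level structure (restricting to the subgraph induced by $\SP_{J}$, running one pass per direction, and handling the $\bot$ cases for $s$ and for nodes with no outward $\SP_{J}$ neighbor) matches the paper. However, the proposal has a genuine gap: it reduces everything to Theorem~\ref{thm:main-ingredient} used as a black box, but that theorem is only a simplified statement in the introduction whose proof \emph{is} the content you are being asked to supply here. The paper's proof of Theorem~\ref{thm:preprocessing} consists of (a) the abstract exponential-backoff HBD procedure with $O(\log n)$ epochs of $O(\log n)$ iterations each, together with the probabilistic argument (Lemma~\ref{lem:abstract-HBD}) that every nonempty hyperedge $M_v$ gets colored whp during epoch $\lfloor \log |M_v| \rfloor$; (b) the beeping implementation of the ``exactly one neighbor sampled $1$'' test via $\Theta(\log n)$ two-round sub-iterations with a random beep slot (Lemma~\ref{lem:HBD-beeping-implementation}), which is where the third $\log n$ factor comes from; and (c) the scheduling that makes the per-layer invocations run in parallel. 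None of these appear in your write-up.

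The scheduling point is not merely an omission but an error: you assert that a node acting simultaneously as a hyperedge in the layer-$i$ instance and as a vertex in the layer-$(i{+}1)$ instance ``is already covered by the all-layers-concurrently part \ldots so no further scheduling is needed.'' In the beeping implementation the vertex side beeps while the hyperedge side listens, so a node in $\Layer_{i+1}$ would have to beep (for the pair $(\Layer_i,\Layer_{i+1})$) and listen (for the pair $(\Layer_{i+1},\Layer_{i+2})$) in the same rounds, and a listener in $\Layer_i$ would hear spurious beeps from $\Layer_{i-1}$. The paper resolves this by observing that layers $i$ and $j$ do not interfere when $|i-j|\geq 3$ and splitting each direction into three subphases indexed by $i \bmod 3$ (Algorithm~\ref{alg:preprocessing-phase}); since $3$ is a constant this preserves the $O(\log^{3} n)$ bound. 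Your proof needs this (or an equivalent interference argument) to be valid.
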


\paragraph*{The Path Construction Phase}
The path construction phase is invoked once the preprocessing phase completes.
As its name implies, this phase is responsible for constructing the actual path(s) and its implementation depends on the specific task we wish to solve.

\subparagraph*{Single Path Construction}
This task is implemented by growing a path step-by-step from the \source $s$ outwards, through the nodes in $\SP_{J}$, until a \destination is reached.
The fact that $\SP_{J}$ 
only contains nodes that reside on a shortest path to some \destination
effectively restricts the above process to sample a single shortest path from $s$ to an (arbitrary) \destination in~$Y_J$, 
out of all such shortest paths.
Employing the $\Col_{\Out}(v)$ and $\ColSet_{\Out}(v)$ variables that each node
$v \in \SP_{J}$
holds, 
the path growing process is executed while spending
$O (\log\log n)$
rounds per hop.
The following theorem is established in Section~\ref{sec:single}.

\begin{restatable}[]{theorem}{SinglePathMainThm}
\label{thm:singlePathConstruction}
Fix some distance policy
$\pi : 2^{[n]} \rightarrow 2^{[n]}$
and let
$Y \subseteq V$
be a set of \destinations.
There exists an algorithm that constructs a shortest path between the \source $s$ and an arbitrary \destination
$y \in Y_J$
in
$O (D \log\log n + \log^{3} n)$
rounds whp.
\end{restatable}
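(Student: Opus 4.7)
The plan is to stitch together three phases: (i) the wake-up phase of Theorem~\ref{thm:wakeup}; (ii) the preprocessing phase of Theorem~\ref{thm:preprocessing}; and (iii) a deterministic outward walk from $s$ that uses the outwards HBD variables as succinct local identifiers for the next hop. Phases (i) and (ii) are applied as black boxes, contributing $O(D)$ and $O(\log^{3} n)$ rounds, respectively. Any node $v \notin \SP_{J}$ commits to $z_{v} = 0$ at the end of (i) and is silent thereafter; each $v \in \SP_{J}$ enters (iii) holding $\Col_{\Out}(v)$ and $\ColSet_{\Out}(v)$ over a palette of size $k = O(\log^{2} n)$.

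For (iii), I set $T = \lceil \log(k + 1) \rceil = O(\log\log n)$ and $j^{\star} = \min J$, both known to every node from the wake-up phase. I partition the next $T \cdot j^{\star}$ rounds (using the globally synchronized clocks) into $j^{\star}$ consecutive \emph{hop blocks} of length $T$. The invariant is that at the start of hop block $i \in \{ 0, \ldots, j^{\star} - 1 \}$ there is a unique active node $v_{i} \in \SP_{J} \cap \Layer_{i}$, with $v_{0} = s$. During that block, $v_{i}$ beeps the binary encoding of $\Col_{\Out}(v_{i}) \in [k]$ (a beep in round $t$ of the block iff the $t$-th bit of the encoding is $1$) while every other $\SP_{J}$ node listens. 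Each candidate $u \in \SP_{J} \cap \neigh_{v_{i}} \cap \Layer_{i + 1}$ decodes the pattern and checks whether the resulting color lies in $\ColSet_{\Out}(u)$; by Theorem~\ref{thm:preprocessing}(3), exactly one such $u$ sees a match, appoints itself $v_{i + 1}$, and restores the invariant.

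I would argue correctness as follows. For every $i < j^{\star}$, the node $v_{i} \in \SP_{J} \cap \Layer_{i}$ sits on a shortest $s$-to-$y$ path for some $y \in Y_{J}$ at layer $j \geq j^{\star} > i$, so $v_{i}$ has a neighbor in $\SP_{J} \cap \Layer_{i + 1}$, and hence $\Col_{\Out}(v_{i}) \in [k]$ by Theorem~\ref{thm:preprocessing}(1). After hop block $j^{\star} - 1$, the active node $v_{j^{\star}}$ lies in $\SP_{J} \cap \Layer_{j^{\star}}$, and since $j^{\star} \in J$, the definition of $\SP_{j^{\star}}$ forces $v_{j^{\star}} \in Y_{J}$. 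Each node $v_{i}$ of the walk sets $z_{v_{i}} = 1$; all other nodes keep $z = 0$, yielding a shortest path from $s$ to $v_{j^{\star}} \in Y_{J}$. The round complexities sum to $O(D) + O(\log^{3} n) + T \cdot j^{\star} = O(D \log\log n + \log^{3} n)$, and since the walk is deterministic given a successful HBD, the whp guarantee is inherited from Theorem~\ref{thm:preprocessing}.

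The point I would spell out most carefully is the single-transmitter property: within any round of any hop block, only the unique active $v_{i}$ beeps, so the decoding of $\Col_{\Out}(v_{i})$ by the intended successor is unambiguous, even though numerous non-active $\SP_{J}$ nodes in layers $i - 1$, $i$, and $i + 1$ also hear the beep. Globally synchronized clocks together with the known layer indices make this bookkeeping routine: non-candidate listeners simply ignore the decoded color. Termination is then immediate, since every participant knows the length $T \cdot j^{\star}$ of phase (iii) in advance.
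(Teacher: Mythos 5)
Your overall architecture matches the paper's: black-box wake-up and preprocessing phases, followed by an outward layer-by-layer walk in which the unique active node beeps the $O(\log\log n)$-bit binary encoding of $\Col_{\Out}$ and the unique matching neighbor (guaranteed by Theorem~\ref{thm:preprocessing}) appoints itself next. The single-transmitter argument and the round-complexity accounting are fine. However, your termination step has a genuine gap. You stop the walk after $j^{\star} = \min J$ hop blocks and assert that the endpoint $v_{j^{\star}} \in \SP_{J} \cap \Layer_{j^{\star}}$ must be a \destination ``by the definition of $\SP_{j^{\star}}$.'' This conflates $\SP_{J} \cap \Layer_{j^{\star}}$ with $\SP_{j^{\star}} \cap \Layer_{j^{\star}}$. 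Since $\SP_{J} = \bigcup_{j \in J} \SP_{j}$, the node $v_{j^{\star}}$ may belong to $\SP_{j} \cap \Layer_{j^{\star}}$ for some $j \in J$ with $j > j^{\star}$, i.e., it may be an intermediate node on a shortest path to a farther \destination and not a \destination at all. Concretely, with $J = \{3,5\}$ and \destinations $y_{1} \in \Layer_{3}$, $y_{2} \in \Layer_{5}$, the walk can be routed at layer $2$ into a layer-$3$ node lying only on shortest paths to $y_{2}$; your scheme then halts and outputs a path from $s$ to a non-\destination.

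The paper avoids this by running the loop for $j_{\max} = \max J$ iterations (for which $\SP_{J} \cap \Layer_{j_{\max}} = \SP_{j_{\max}} \cap \Layer_{j_{\max}}$ does consist of \destinations only) and by adding the rule that an active node in $Y_{J}$ refrains from beeping its color, so the path freezes as soon as it happens to land on a \destination at an intermediate target layer. Its induction hypothesis explicitly carries the two cases ``the path ends at an arbitrary node of $\SP_{J} \cap \Layer_{\ell}$'' or ``the path has already terminated at a \destination in $Y_{j}$ for some $j \in J \cap [\ell]$.'' Your proof needs the same two modifications (run to $\max J$, and silence active \destinations); with them, the rest of your argument goes through and the complexity bound is unchanged since $\max J \leq D$.
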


\subparagraph*{Shortest-path Tree Construction}
The naive approach towards implementing this task is to grow a path from each \destination in
$Y_J$
towards the \source.
This approach, however, hits an obstacle if multiple path growing processes interfere with each other (which is inevitable whenever multiple \destinations reside in the same layer).
To be more concrete, think of a scenario where multiple nodes
$v \in \Layer_{i}$
attempt to beep their colors simultaneously, thus introducing interference for the nodes
$u \in \Layer_{i - 1}$
and preventing them from decoding any of the colors.

There are several approaches to bypass this hurdle.
The first is to use superimposed-codes~\cite{KS64}, which allow nodes at layer
$i - 1$
to correctly obtain the colors of up to
$t = |Y|$
layer $i$ nodes that beep simultaneously.
However, beeping a color in this case would require
$O(t^2 \log \log n)$
rounds, which is suitable only when the number $t$ of \destinations is relatively small.
In this paper, we do not make any assumptions on $t$, allowing in particular for
$t = O (n)$
\destinations, which makes this solution too expensive, especially if the value of $t$ is not known to the parties.

Instead, we utilize the fact that our HBD solutions use relatively small color palettes, that is,
$k = O (\log^{2} n)$.
Hence, we can unary encode each color, using $k$ rounds, which allows us to bypass the interference obstacle.
Specifically, given that the tree growing process already reached a node
$v \in \SP_{J} \cap\, \Layer_{i}$,
we allocate $k$ 
(rather than $O (\log k)$)
rounds to the (sub)task of identifying a node
$u \in \SP_{J} \cap\, \neigh_{v} \cap \Layer_{i - 1}$.
Within this window of $k$ rounds, node $v$ beeps in round
$\Col_\In(v) \in [k]$.
Following that, a node
$u \in \SP_{J} \cap\, \Layer_{i - 1}$
identifies itself as part of the constructed tree if it hears a beep in a round indexed by (at least) one of the colors in its color set $\ColSet_{\In}(u)$.
(Notice that several paths that reach layer $i$ may merge into a single path at layer
$i - 1$.)
This process continues, spending
$O (k)$
rounds per layer, until the \source is reached.
The following theorem is established in Section~\ref{sec:multiple}.

\begin{restatable}[]{theorem}{MultiplePathMainThm}
\label{thm:MultiplePathConstruction}
Fix some distance policy
$\pi : 2^{[n]} \rightarrow 2^{[n]}$
and let
$Y \subseteq V$
be a set of \destinations.
There exists an algorithm that constructs a $Y_J$-spanning shortest-path tree rooted at~$s$ in
$O (D \log^{2} n + \log^{3} n)$
rounds whp.
\end{restatable}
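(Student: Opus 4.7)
The plan is to pipeline the three phases established earlier: invoke the wake-up phase (Theorem~\ref{thm:wakeup}) for $O(D)$ rounds, then the preprocessing phase (Theorem~\ref{thm:preprocessing}) for $O(\log^{3} n)$ rounds, and finally a path construction phase that uses the inwards HBD variables $\Col_{\In}$ and $\ColSet_{\In}$ to grow a shortest-path tree layer by layer from $Y_{J}$ towards~$s$ in $O(D \log^{2} n)$ additional rounds. Summing the three budgets yields the claimed round complexity, so the task reduces to specifying the third phase and proving its correctness.

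For the third phase I would proceed as follows. Each \destination $y \in Y_{J}$ initializes $z_{y} := 1$; all other nodes in $\SP_{J}$ start with $z_{v} := 0$, while nodes outside $\SP_{J}$ have already set $z_{v} = 0$ in the wake-up phase and are now dormant. Using the globally-synchronized clocks, I partition the rounds following preprocessing into $d_{\max}$ consecutive windows of length $k = O(\log^{2} n)$. During the window indexed by $i$ (processed in decreasing order from $d_{\max}$ down to $1$), every $v \in \SP_{J} \cap \Layer_{i}$ with $z_{v} = 1$ beeps in the single round whose local index within the window equals $\Col_{\In}(v)$ and stays silent in the remaining $k-1$ rounds; simultaneously, every $u \in \SP_{J} \cap \Layer_{i-1}$ listens throughout the window and sets $z_{u} := 1$ if it hears a beep in at least one round whose index lies in $\ColSet_{\In}(u)$. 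No cross-window interference arises because during window $i$ only layer-$i$ nodes beep, and any neighbor of a layer-$i$ node lies in $\Layer_{i-1} \cup \Layer_{i} \cup \Layer_{i+1}$, among which only layer-$(i-1)$ nodes are listening.

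Correctness follows by downward induction on $i$ from the invariant that after window $i$ completes, the set of nodes with $z_{v} = 1$ at layer indices $\geq i-1$ forms a shortest-path forest whose leaves include precisely the \destinations in $Y_{J} \cap \bigcup_{j \geq i} \Layer_{j}$ and whose current roots lie in $\SP_{J} \cap \Layer_{i-1}$. The base case is immediate as only $Y_{J}$ is marked initially. The inductive step relies on item~(4) of Theorem~\ref{thm:preprocessing}: each marked $v \in \Layer_{i}$ has exactly one neighbor $u \in \SP_{J} \cap \Layer_{i-1}$ with $\Col_{\In}(v) \in \ColSet_{\In}(u)$, so the beep of $v$ is ``heard in the right slot'' by a unique layer-$(i-1)$ parent. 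Since $\Layer_{0} = \{s\}$ and $s \in \SP_{J}$, the construction terminates at the \source after window $1$, and merges of several children into a common $u$ (whose $\ColSet_{\In}(u)$ happens to contain multiple beeped colors) simply yield the expected internal tree nodes rather than a DAG.

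The main obstacle I anticipate is the interference issue: multiple marked layer-$i$ nodes may beep simultaneously in the same slot of the window, and I must argue that this cannot cause a wrong layer-$(i-1)$ node to be marked. This is handled for free by the way the HBD guarantee is phrased, since the uniqueness statement in Theorem~\ref{thm:preprocessing}(4) is at the level of individual hyperedges, so the mapping from marked children to parents remains a well-defined function under superposition of beeps; a $u \in \SP_{J} \cap \Layer_{i-1}$ that is \emph{not} the designated parent of any marked $v$ will only ever hear beeps in slots outside $\ColSet_{\In}(u)$ and hence correctly abstains. Finally, the path construction phase is deterministic given the HBD colorings, so the failure probability is entirely absorbed by the whp guarantees of Theorems~\ref{thm:wakeup} and~\ref{thm:preprocessing}, and the additive round complexity $O(D) + O(\log^{3} n) + O(D \log^{2} n) = O(D \log^{2} n + \log^{3} n)$ matches the statement of the theorem.
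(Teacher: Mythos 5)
Your proposal is correct and follows essentially the same route as the paper: grow the tree inwards from $Y_J$ one layer per window of $k=O(\log^{2} n)$ rounds, beep the unary encoding of $\Col_{\In}$, and invoke item~(4) of Theorem~\ref{thm:preprocessing} to argue that every marked node acquires a unique parent while non-parents never hear a beep in a slot of their own color set, so simultaneous beeps are harmless. The only cosmetic slip is your invariant's claim that the leaves are ``precisely'' the destinations---a destination may end up as an \emph{internal} node of the forest when it lies on a selected shortest path to a farther destination---but the induction goes through once the invariant is phrased, as in the paper, as ``the marked nodes form a forest that is the union of shortest paths from each destination in the already-processed layers to the current roots.''
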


\section{Preprocessing Phase}
\label{sec:preprocessing}
The preprocessing phase is invoked, by all $\SP_{J}$ nodes in synchrony, upon completion of the wake-up phase.
Its guarantees are cast in 
Theorem~\ref{thm:preprocessing} stated above.
To avoid cumbersome expressions, the graph theoretic notation used in the scope of the current section takes the subgraph $G(\SP_{J})$ induced by $G$ on $\SP_{J}$ as the underlying graph;
in particular, we use the layer notation $\Layer_{i}$ while we actually refer to
$\Layer_{i} \cap \SP_{J}$.


At the heart of the preprocessing phase, lies an HBD procedure that operates under the beeping model.
This procedure is invoked on an (ordered) layer pair
$(\Layer_{i}, \Layer_{i'})$,
where
$|i - i'| = 1$
(that is, the layers are adjacent), and is implemented (only) by the nodes included in these two layers;
to simplify the presentation, we subsequently denote
$\Layer = \Layer_{i}$
and
$\Layer' = \Layer_{i'}$.
Taking
$M_{v} = \neigh_{v} \cap \Layer'$
to be the neighborhood in $\Layer'$ of a node
$v \in \Layer$,
the goal of the procedure is to construct an HBD solution on the hypergraph
\[
H
\, = \,
\left( \Layer', \{ M_{v} \}_{v \in \Layer \, : \, M_{v} \neq \emptyset} \right)
\, ;
\]
that is, we identify each node
$v \in \Layer$
with the hyperedge~$M_v$.
The HBD procedure assigns a color
$\Col(v) \in [k] \cup \{ \bot \}$
to each node
$v \in \Layer$
and a color set
$\ColSet(u) \in 2^{[k]}$
to each node
$u \in \Layer'$.
These assignments are subject to the following two constraints for each node
$v \in \Layer$:
(1)
$\Col(v) = \bot$
if and only if
$M_{v} = \emptyset$;
and
(2)
if
$M_{v} \neq \emptyset$,
then
there exists exactly one node
$u \in M_{v}$
such that
$\Col(v) \in \ColSet(u)$.
The procedure runs in
$O (\log^{3} n)$
rounds and succeeds whp.

Before presenting the implementation of the HBD procedure, let us explain how the procedure is employed to achieve the objectives of the preprocessing phase as stated in Theorem~\ref{thm:preprocessing}.
First, we invoke the HBD procedure on the layer pair
$(\Layer_{i}, \Layer_{i + 1})$
for
$i = 0, 1, \dots, d_{\max} - 1$,
assigning
$\Col_{\Out}(v) \gets \Col(v)$
for each node
$v \in \Layer_{i}$
and
$\ColSet_{\Out}(u) \gets \ColSet(u)$
for each node
$u \in \Layer_{i + 1}$.
Then, we invoke the HBD procedure on the layer pair
$(\Layer_{i + 1}, \Layer_{i})$
for
$i = 0, 1, \dots, d_{\max} - 1$,
assigning
$\Col_{\In}(v) \gets \Col(v)$
for each node
$v \in \Layer_{i + 1}$
and
$\ColSet_{\In}(u) \gets \ColSet(u)$
for each node
$u \in \Layer_{i}$.
Recalling that each invocation of the procedure takes
$O (\log^{3} n)$
rounds and succeeds whp, the remaining challenge is to parallelize these invocations so that the running time of all of them remains
$O (\log^{3} n)$.

To this end, we exploit the simple fact that the beeps of nodes belonging to layers $i$ and $j$ do not interfere as long as
$|i - j| \geq 3$.
This means that if
$|i - j| \geq 3$,
then we can safely invoke the HBD procedure concurrently on the layer pairs
$(\Layer_{i}, \Layer_{i + 1})$
and
$(\Layer_{j}, \Layer_{j + 1})$
as well as on the layer pairs
$(\Layer_{i + 1}, \Layer_{i})$
and
$(\Layer_{j + 1}, \Layer_{j})$.

\begin{algorithm}[ht]
\caption{\label{alg:preprocessing-phase}%
The preprocessing phase}
\begin{algorithmic}[1]
\For{$\ell \in \{ 0, 1, 2 \}$}

    \State run concurrently for all $0 \leq i \leq d_{\max} - 1$ such that $i = \ell \bmod 3$:
    \Indent
        \State invoke the HBD procedure on the layer pair $(\Layer_{i}, \Layer_{i + 1})$
        \ForAll{$v \in \Layer_{i}$}
            \State $\Col_{\Out}(v) \gets \Col(v)$
        \EndFor
        \ForAll{$u \in \Layer_{i + 1}$}
            \State $\ColSet_{\Out}(u) \gets \ColSet(u)$
        \EndFor
    \EndIndent

    \State run concurrently for all $0 \leq i \leq d_{\max} - 1$ such that $i = \ell \bmod 3$:
    \Indent
        \State invoke the HBD procedure on the layer pair $(\Layer_{i + 1}, \Layer_{i})$
        \ForAll{$v \in \Layer_{i + 1}$}
            \State $\Col_{\In}(v) \gets \Col(v)$
        \EndFor
        \ForAll{$u \in \Layer_{i}$}
            \State $\ColSet_{\In}(u) \gets \ColSet(u)$
        \EndFor
    \EndIndent

\EndFor
\end{algorithmic}
\end{algorithm}

Consequently, the preprocessing phase consists of $6$ subphases, indexed by
$\delta \in \{ \Out, \In \}$
and
$\ell \in \{ 0, 1, 2 \}$.
In subphase
$(\Out, \ell)$
(resp.,
$(\In, \ell)$),
we invoke the HBD procedure on the layer pairs
$(\Layer_{i}, \Layer_{i + 1})$
(resp.,
$(\Layer_{i + 1}, \Layer_{i})$)
for all
$0 \leq i \leq d_{\max} - 1$
such that
$i = \ell \bmod{3}$.
Refer to Algorithm~\ref{alg:preprocessing-phase} for a pseudocode of the preprocessing phase. 
Since each subphase takes
$O (\log^{3} n)$
rounds and succeeds whp, it follows that the whole phase takes
$O (\log^{3} n)$
rounds and succeeds whp as promised in Theorem~\ref{thm:preprocessing}.

\subparagraph*{Implementing the HBD Procedure}
We now turn to explain how the HBD procedure works.
To facilitate the exposition, we first present the procedure as an abstract iterative ``exponential backoff'' process, irrespective of the computational model in which it operates;
refer to Algorithm~\ref{alg:HBD-abstract} for a pseudocode.
Following that, we provide an implementation of this process in the beeping model that runs in
$O (\log^{3} n)$
rounds and succeeds whp.

\begin{algorithm}[ht]
\caption{\label{alg:HBD-abstract}%
An abstract view of the HBD procedure when invoked on a layer pair
$(\Layer, \Layer')$}
\begin{algorithmic}[1]

\ForAll{$v \in \Layer$}
\State $\Col(v) \gets \bot$
\EndFor

\ForAll{$u \in \Layer'$}
\State $\ColSet(u) \gets \emptyset$
\EndFor

\For{$i = 0, 1, \dots, \lfloor \log n \rfloor$}
    \For{$j = 1, \dots, \Theta (\log n)$}
        \ForAll{$u \in \Layer'$}
            \State $X_{u}(i, j) \gets \mathrm{Bernoulli}(2^{-i})$
            \If{$X_{u}(i, j) = 1$}
                \State $\ColSet(u) \gets \ColSet(u) \cup \{ (i, j) \}$
            \EndIf
        \EndFor
        \ForAll{$v \in \Layer$ such that $\Col(v) = \bot$}
            \If{$X_{u}(i, j) = 1$ for exactly one node $u \in M_{v}$}\label{line:HBD-abstract:test-exectly-one}
                \State $\Col(v) \gets (i, j)$
            \EndIf
        \EndFor
    \EndFor
\EndFor

\end{algorithmic}
\end{algorithm}

Consider an invocation of the HBD procedure on a layer pair
$(\Layer, \Layer')$.
Initially, the procedure sets
$\Col(v) \gets \bot$
for each node
$v \in \Layer$
and
$\ColSet(u) \gets \emptyset$
for each node
$u \in \Layer'$.
Following that, the procedure consists of
$\lfloor \log n \rfloor + 1$
successive \emph{epochs}, each running for
$\Theta (\log n)$
iterations, so that iteration
$1 \leq j \leq \Theta (\log n)$
of epoch
$0 \leq i \leq \lfloor \log n \rfloor$
is responsible for the assignment of color
$(i, j)$.

During iteration
$(i, j)$,
each node
$u \in \Layer'$
samples a Bernoulli random variable
$X_{u}(i, j)$
whose success probability is
$2^{-i}$
and adds the color
$(i, j)$
to $\ColSet(u)$, assigning
$\ColSet(u) \gets \ColSet(u) \cup \{ (i, j) \}$,
if
$X_{u}(i, j) = 1$.
Assuming that a node
$v \in \Layer$
is still uncolored (i.e.,
$\Col(v) = \bot$)
at the beginning of iteration
$(i, j)$,
we assign
$\Col(v) \gets (i, j)$
if there is exactly one node
$u \in M_{v}$
such that
$X_{u}(i, j) = 1$.

\begin{lemma}
\label{lem:abstract-HBD}
The abstract HBD procedure (Algorithm~\ref{alg:HBD-abstract}) constructs a feasible HBD solution whp. The procedure takes $O(\log ^2n)$ rounds and features a color palette of size $k=O(\log^2 n)$.
\end{lemma}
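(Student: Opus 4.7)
\textbf{Plan for Lemma~\ref{lem:abstract-HBD}.} The bounds on palette size and round count are immediate from the structure of Algorithm~\ref{alg:HBD-abstract}: the double loop performs $(\lfloor \log n \rfloor + 1)\cdot\Theta(\log n) = O(\log^2 n)$ iterations, and each iteration both (abstractly) constitutes one round and introduces one fresh color $(i,j)$, so the palette $\{(i,j)\}$ has size $k = O(\log^2 n)$. Feasibility is almost free as well: by construction, the procedure assigns $\Col(v) = (i,j)$ only after verifying that precisely one $u \in M_v$ satisfies $X_u(i,j) = 1$, and since $\ColSet(u)$ is updated to include $(i,j)$ in exactly the same iteration whenever $X_u(i,j)=1$, this means exactly one $u \in M_v$ has $(i,j) \in \ColSet(u)$. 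Thus feasibility reduces to the single statement that, whp, every $v \in \Layer$ with $M_v \neq \emptyset$ eventually leaves the loop with $\Col(v) \neq \bot$; the condition ``$\Col(v) = \bot$ iff $M_v = \emptyset$'' is free in the other direction because an empty $M_v$ trivially precludes the ``exactly one'' test from ever firing.

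To establish the whp-coloring claim, I would fix an arbitrary $v \in \Layer$ with $d := |M_v| \geq 1$ and zoom in on the ``matched'' epoch $i^* = \lceil \log d \rceil$, which lies in $\{0,\ldots,\lfloor \log n \rfloor\}$ because $d \leq n$. In any single iteration $(i^*, j)$, the variables $\{X_u(i^*,j) : u \in M_v\}$ are independent Bernoullis with success probability $p = 2^{-i^*}$, and the standard one-in-$d$ calculation gives
\[
\Pr\bigl[\,\text{exactly one } u \in M_v \text{ has } X_u(i^*,j) = 1 \,\bigr]
\; = \; d\,p\,(1-p)^{d-1} \; \geq \; \alpha
\]
for an absolute constant $\alpha > 0$, using $d\,p \in [1/2, 1]$ (since $2^{i^*} \in [d, 2d)$) and $(1-p)^{d-1} \geq (1 - 1/d)^{d-1} \geq 1/e$. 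Crucially, this event depends only on the iteration's own random bits, so the events across the $\Theta(\log n)$ iterations of epoch $i^*$ are mutually independent; the probability that none of them fires is therefore at most $(1-\alpha)^{\Theta(\log n)}$, which can be driven below $n^{-c-1}$ for any prescribed constant $c$ by tuning the hidden $\Theta(\log n)$ constant. Whenever at least one such iteration fires, $v$ is colored by (or before) that iteration. A union bound over the at most $n$ choices of $v$ closes the whp claim.

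The main subtlety, rather than a genuine obstacle, is handling the independence step cleanly: one must not condition on ``$v$ is still uncolored at the start of iteration $(i^*,j)$'' when lower bounding the single-success probability in that iteration, since such conditioning entangles past and future random draws. The correct reading is that the events ``iteration $(i^*,j)$ witnesses a single success in $M_v$'' are defined unconditionally on that iteration's fresh Bernoulli draws, and it suffices that at least one of them occurs in order for $v$ to be colored by the end of epoch $i^*$. The genuinely hard work, postponed to the rest of Section~\ref{sec:preprocessing}, is the concrete realization of the ``exactly one'' predicate of Line~\ref{line:HBD-abstract:test-exectly-one} under the beeping model, where a listener natively distinguishes only silence from at-least-one-beep; this is what inflates the beeping round count to $O(\log^3 n)$.
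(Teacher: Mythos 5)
Your proof is correct and follows essentially the same route as the paper's: deterministic feasibility of the exactly-one assignment, then for each $v$ a constant per-iteration single-success probability in the epoch matched to $|M_v|$, amplified over the $\Theta(\log n)$ independent iterations of that epoch and closed by a union bound over nodes. The only (cosmetic) divergence is your epoch choice $i^*=\lceil \log d\rceil$ versus the paper's $\lfloor \log d\rfloor$, which spares you the paper's special-casing of $d\in\{1,2\}$ but requires a one-line fix since $\lceil\log d\rceil$ can exceed the last epoch index $\lfloor\log n\rfloor$ (e.g.\ $d=5$, $n=6$), contrary to your parenthetical justification; either rounding works after a trivial adjustment.
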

\begin{proof}
Observe first that the process is designed so that if
$\Col(v) = (i, j)$
then there exists exactly one node
$u \in M_{v}$
with
$(i, j) \in \ColSet(u)$
(this holds with probability $1$).
To establish the assertion, it remains to show that upon termination of the procedure, the function $\Col(\cdot)$ satisfies
$\Col(v) \neq \bot$
for all nodes
$v \in \{ w \in \Layer \mid M_{w} \neq \emptyset \}$
whp.

To this end, fix some node
$v \in \Layer$
with
$|M_{v}| = r > 0$
and consider epoch
$i = \lfloor \log r \rfloor$
of the procedure.
We say that iteration
$1 \leq j \leq \Theta (\log n)$
of epoch $i$ is successful if
$X_{u}(i, j) = 1$
for exactly one node $u \in M_{v}$.
For each node
$u \in M_{v}$,
let $A_{u}$ denote the event that
$X_{u}(i ,j) = 1$
and
$X_{u'}(i, j) = 0$
for all nodes
$u' \in M_{v} \setminus \{ u \}$.
Observe that
\[
\Pr(A_{u})
\, = \,
2^{-\lfloor \log r \rfloor} \cdot \left( 1 - 2^{-\lfloor \log r \rfloor} \right)^{r - 1}
\, \geq \,
\frac{1}{r} \cdot \left( 1 - \frac{2}{r} \right)^{r-1}
\, ;
\]
it is then easy to verify that
$\Pr(A_u)=1$
for
$r=1$
and
$\Pr(A_u)=1/4$
for
$r=2$;
for
$r \geq 3$,
the inequality yields
$\Pr(A_u) \geq \Omega (1 / r)$.
Since the events $A_{u}$,
$u \in M_{v}$,
are disjoint, it follows that the probability that iteration $j$ is successful is
\[
\Pr \left( \bigvee_{u \in M_{v}} A_{u} \right)
\, = \,
\sum_{u \in M_{v}} \Pr(A_{u})
\, \geq \,
r \cdot \Omega (1 / r)
\, = \,
\Omega (1)
\, \text{,}
\]
independently of the other iterations.
Therefore, at least one iteration of epoch $i$ is successful whp which means that if $v$ is not colored during an earlier epoch, then $v$ is colored by the end of epoch $i$ whp.
The assertion follows by a union bound over all nodes
$v \in \Layer$.
\end{proof}

\smallskip

For the beeping implementation of the HBD procedure, notice that the only component of the aforementioned abstract iterative process that cannot be implemented, out of the box, under the beeping model is the part in which an uncolored node
$v \in \Layer$
tests whether
$X_{u}(i, j) = 1$
for exactly one node
$u \in M_{v}$
(Line~\ref{line:HBD-abstract:test-exectly-one} in Algorithm~\ref{alg:HBD-abstract}).
To overcome this obstacle, we use the following standard technique:
Iteration
$1 \leq j \leq \Theta (\log n)$
of epoch
$0 \leq i \leq \lfloor \log n \rfloor$
is implemented by means of
$\Theta (\log n)$
\emph{sub-iterations}, each consisting of two rounds.
A node
$u \in \Layer'$
that samples
$X_{u}(i, j) = 1$
beeps once in each sub-iteration
$\ell = 1, \dots, \Theta (\log n)$,
where the round in which $u$ beeps is picked uniformly at random (and independently of the the other sub-iterations) among the two rounds of sub-iteration $\ell$.
An (uncolored) node
$v \in \Layer$
assigns
$\Col(v) \gets (i, j)$
if and only if $v$ hears exactly one beep in each sub-iteration during iteration $j$ of epoch $i$.

\begin{lemma}
\label{lem:HBD-beeping-implementation}
A node
$v \in \Layer$
assigns
$\Col(v) \gets (i, j)$
if and only if
$X_{u}(i, j) = 1$
for exactly one node
$u \in M_{v}$
whp.
\end{lemma}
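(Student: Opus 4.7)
The plan is to verify both directions of the biconditional, then combine them by a union bound. For the forward direction, observe that only nodes in $\Layer'$ beep during the iteration and the only $\Layer'$-neighbors of $v$ are the nodes of $M_{v}$, so the beeps heard by $v$ in iteration $(i,j)$ come exactly from those $u \in M_{v}$ with $X_{u}(i,j)=1$. Hence if exactly one such $u$ exists, then in every sub-iteration $v$ hears a beep in precisely one of the two rounds (the one that $u$ picked), and so the test succeeds deterministically. If zero such $u$ exist, $v$ hears no beeps in any sub-iteration and the test fails deterministically.

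The substantive direction is the case where $t := |\{u \in M_{v} : X_{u}(i,j)=1\}| \geq 2$: I must show that $v$ gets fooled into hearing "exactly one beep" in every one of the $\Theta(\log n)$ sub-iterations only with very small probability. Fix a sub-iteration $\ell$. Since each of the $t$ beeping nodes independently selects one of the two rounds of $\ell$ uniformly, $v$ hears a beep in exactly one of the two rounds iff all $t$ beepers happen to pick the same round. This event has probability $2 \cdot 2^{-t} = 2^{1-t} \leq 1/2$, since $t \geq 2$. Because the choices across different sub-iterations are mutually independent, the probability that $v$ sees "exactly one beep" in all $\Theta(\log n)$ sub-iterations is at most $2^{-\Omega(\log n)}$, which can be driven below $n^{-c'}$ for any desired constant $c' > 0$ by tuning the hidden constant in the $\Theta(\log n)$ sub-iteration count.

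To finish, I would take a union bound over all relevant events. The number of triples $(v,i,j)$ at which the implementation could err is at most $|\Layer| \cdot (\lfloor \log n \rfloor + 1) \cdot \Theta(\log n) = O(n \log^{2} n)$. Choosing the sub-iteration constant large enough so that each individual failure probability is at most $n^{-c-3}$ (say), the total failure probability is $n^{-c}$, yielding the whp statement. The write-up should be short: the deterministic cases are dispatched in one line each, the $t \geq 2$ case is the heart of the argument, and the union bound is routine.

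I do not expect any real obstacles here; the only care needed is to notice that (a) the probability bound $2^{1-t} \leq 1/2$ is uniform over all $t \geq 2$ (so one does not need to case-split on how many $\Layer'$-neighbors of $v$ sampled $1$), and (b) the union bound budget must be accounted for across all iterations in the procedure, not just a single $(i,j)$, so the constants in the sub-iteration length and the whp statement must be chosen consistently.
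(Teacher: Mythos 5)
Your proposal is correct and follows essentially the same approach as the paper's proof: the deterministic \emph{if} direction, and for the \emph{only if} direction the observation that with $t \geq 2$ beeping neighbors, each sub-iteration fools $v$ with probability at most $1/2$, so $\Theta(\log n)$ independent sub-iterations drive the failure probability to $n^{-\Omega(1)}$. The paper argues via just two witnesses $u, u'$ coinciding rather than your exact $2^{1-t}$ computation, and defers the union bound to where the lemma is applied, but these are cosmetic differences.
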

\begin{proof}
The \emph{if} direction clearly holds with probability $1$.
For the \emph{only if} direction, notice that if
$X_{u}(i, j) = 1$
and
$X_{u'}(i, j) = 1$
for two distinct nodes
$u, u' \in M_{v}$,
then $u$ and $u'$ beep in the same round in all sub-iterations of iteration~$j$ of epoch~$i$ in an exponentially decreasing probability with the number of sub-iterations.
This means that $v$ avoids assigning
$\Col(v) \gets (i, j)$
whp. 
\end{proof}

Combining Lemmas~\ref{lem:abstract-HBD} and~\ref{lem:HBD-beeping-implementation}, we conclude that the beeping-model implementation of the HBD procedure succeeds whp.
Theorem~\ref{thm:preprocessing} follows as this implementation consists of
$O (\log n)$
epochs, each consists of
$O (\log n)$
iterations, each consists of
$O (\log n)$
sub-iterations, each consists of
$O (1)$
rounds, which sums up to
$O (\log^{3} n)$
rounds in total.
The palette size $k=O(\log^2 n)$ remains as guaranteed by the abstract HBD procedure, Lemma~\ref{lem:abstract-HBD}.

\section{Single Shortest Path Construction} 
\label{sec:single}

In this section we consider the task of constructing a single path between the \source and an arbitrary \destination. We describe the path construction phase which is invoked by all $\SP_J$ nodes in synchrony, upon completion of the preprocessing phase. The pseudo-code of the path construction phase is provided in Algorithm~\ref{alg:singlePathConstruction}. After the completion of this phase, we are guaranteed with the statement of Theorem~\ref{thm:singlePathConstruction}, which we now recall.

\begin{algorithm}[!ht]
\caption{Path construction phase (for node $v$)}
\label{alg:singlePathConstruction}
\begin{algorithmic}[1]
\State \textbf{Init:}  $state_v \gets \isactive$ if $v = s$; otherwise 
 $state_v \gets \notactive$
\Statex

\For {$\ell = 1$ to~$j_{\max} \triangleq \max J$} 
    \If {$state_v = \isactive$ and $v \notin Y_J$ and $\Distance(v,s)=\ell-1$} \label{line:beepColorCondition}
        \State beep the value~$c_\Out(v)$  \Comment {In $O(\log\log n)$ rounds}
        \label{line:single:online - beep color}
    \ElsIf{$\Distance(v,s)=\ell$}
        \State listen for $O(\log\log n)$ rounds and set $\hat c$ as the value heard
        \If {$\hat c \in cs_\Out(v)$} 
            $state_v \gets \isactive$
        \EndIf
    \Else
        \State stay idle for $O(\log\log n)$ rounds  \Comment{Ensures synchronized iterations}
    \EndIf   
\EndFor

\State output~1 if $state_v = \isactive$; otherwise output~0 
\end{algorithmic}
\end{algorithm}

\SinglePathMainThm*

Nodes in $\SP_J$ construct a shortest path outwards, layer by layer, from the \source to an arbitrary \destination in~$Y_J$.  
More precisely, the path is constructed in (up to) $j_{\max} = \max J$ iterations of $O(\log \log n)$ rounds, such that each iteration extends the path by one layer outwards. Initially, the \source is active. In iteration $1 \leq \ell \leq j_{\max}$, the active node $v_{\ell-1}$, which is in $\Layer_{\ell-1}$, beeps the binary encoding of its color $c_\Out(v_{\ell-1})$. That is, for $1\le i \le O(\log\log n)$, node $v_{\ell-1}$ beeps in the $i$-th round if and only if the $i$-th bit of $c_\Out(v_{\ell-1})$ is $1$. Then, every node $u \in \SP_J \cap\, \Layer_{\ell}$ that hears a color that belongs in its color set $cs_\Out(u)$ becomes active.  Theorem~\ref{thm:preprocessing} guarantees that among $\SP_J \cap\, \neigh_{v_{\ell-1}} \cap\, \Layer_{\ell}$, exactly a single node becomes active. 
%
After at most $j_{\max}$~iterations, the path will reach some \destination in~$Y_J$; this \destination will remain silent, preventing the path from extending any further.
All the nodes that have become active during this process output~$1$ and form the shortest path, whereas all of the other nodes output~$0$. 

\begin{lemma}
\label{lem:thirdPhasePath}
The path construction phase (Algorithm~\ref{alg:singlePathConstruction}) takes $O(D \log \log n)$ rounds. Upon completion, the set of nodes $\{v \in V \mid z_v = 1\}$ forms a shortest path between $s$ and some \destination in $Y_J$. 
\end{lemma}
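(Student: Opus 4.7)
The plan is to handle the round bound and the correctness separately, using Theorem~\ref{thm:preprocessing} as a black box. The round bound is immediate: the outer loop consists of $j_{\max} = \max J \le d_{\max} \le D$ iterations, and each iteration uses $O(\log\log n)$ rounds to beep or listen for the binary encoding of a color in $[k]$ with $k = O(\log^{2} n)$, so the total round complexity is $O(D \log\log n)$.

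For correctness, I would maintain the following invariant by induction on $\ell$, conditioned on the success of the preprocessing phase (which holds whp): after iteration $\ell$, either (a) the active nodes form a shortest path $s = v_{0}, v_{1}, \dots, v_{\ell}$ with $v_{i} \in \SP_{J} \cap \Layer_{i}$ for each $i$, or (b) the active nodes already form a shortest path from $s$ to some destination $v_{\ell'} \in Y_{J} \cap \Layer_{\ell'}$ with $\ell' \le \ell$, and no further node will ever be activated. The base case $\ell = 0$ is trivial. In the inductive step, if case (b) already holds then no active node sits at distance $\ell-1$ from $s$, so the guard in line~\ref{line:beepColorCondition} fires for no one, nobody beeps, and the invariant persists. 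Otherwise we are in case (a) with endpoint $v_{\ell-1}$: if $v_{\ell-1} \in Y_{J}$ then line~\ref{line:beepColorCondition} keeps it silent and we transition to case (b); if instead $v_{\ell-1} \in \SP_{J} \setminus Y_{J}$, then by the definition of $\SP_{J}$ there is a shortest path from $s$ through $v_{\ell-1}$ to a destination at a strictly larger layer in $J$, so its next hop lies in $\SP_{J} \cap \neigh_{v_{\ell-1}} \cap \Layer_{\ell}$. Theorem~\ref{thm:preprocessing}(1) then ensures $c_{\Out}(v_{\ell-1}) \in [k]$, and Theorem~\ref{thm:preprocessing}(3) pins down a unique neighbor $u \in \SP_{J} \cap \neigh_{v_{\ell-1}} \cap \Layer_{\ell}$ with $c_{\Out}(v_{\ell-1}) \in cs_{\Out}(u)$; this $u$ becomes the new endpoint $v_{\ell}$, preserving case (a).

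Finally, I would verify that invariant (b) necessarily holds once the loop exits. If invariant (a) still held after iteration $j_{\max}$ with endpoint $v_{j_{\max}} \in \SP_{J} \cap \Layer_{j_{\max}}$, then $v_{j_{\max}}$ would lie on a shortest path to a destination at some layer $j \in J$ with $j \ge j_{\max} = \max J$; this forces $j = j_{\max}$, and hence $v_{j_{\max}}$ must itself equal that destination, contradicting $v_{j_{\max}} \notin Y_{J}$ assumed in case~(a). The most delicate piece of the write-up is isolating the bookkeeping that distinguishes ``still extending'' (case~(a)) from ``already terminated'' (case~(b)) in the induction; beyond that, the lemma reduces to direct invocations of Theorem~\ref{thm:preprocessing}, and no additional probabilistic argument is required beyond the whp guarantee inherited from the preprocessing phase.
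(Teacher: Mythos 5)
Your proof is correct and follows essentially the same route as the paper's: an induction over the iterations with the same two-case invariant (path still extending through $\SP_J\cap\Layer_\ell$ versus path already terminated at a destination), with Theorem~\ref{thm:preprocessing} supplying the unique next hop. You additionally spell out two details the paper leaves implicit — why a non-destination endpoint in $\SP_J$ is guaranteed an $\SP_J$ neighbor in the next layer (so its color is not $\bot$), and why case (b) must hold by iteration $j_{\max}$ — which is a welcome tightening rather than a different argument.
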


\begin{proof}
Bounding the round complexity is straightforward: The outer loop runs for $j_{\max}=O(D)$ iterations, where each iteration takes $O(\log\log n)$ rounds. 

Next, we prove by induction on $0 \leq \ell \leq j_{\max}$ that the active nodes at the end of the $\ell$-th iteration form a shortest path between $s$ and either (a) (an arbitrary node in) $\SP_J \cap\, \Layer_\ell$, or (s) some \destination $y \in Y_j$ for $j \in J \cap [\ell]$. Since $\SP_{j_{\max}}$ consists of \destinations only, the induction statement for $\ell = j_{\max}$ implies the correctness of the third subphase. 

For the base case $\ell = 0$ (i.e., the initial state), the only active node is $s$ and since $s \in \SP_{J}$, the claim holds trivially. Next, assume that the above holds for some $0 \leq \ell < j_{\max}$ and consider the next iteration, with $\ell'=\ell+1$. By the induction hypothesis, the active nodes at the end of the $\ell$-th iteration form a shortest path between $s$ and either (a) an arbitrary node in $\SP_J \cap\, \Layer_\ell$, or (s) some \destination $y \in Y_j$ for $j \in J \cap [\ell]$.
In case (s), the path no longer extends since the \destination $y$ does not beep its color $c_\Out(y)$ (Line \ref{line:beepColorCondition}). Hence, we now consider case (a). Let $(s,v_1)(v_1,v_2)\cdots (v_{\ell-1},v_{\ell})$ be the currently-constructed shortest path. 
Recall that all these nodes are active at the end of the $\ell$-th iteration (and beyond). 
Note that $v_{\ell} \in \Layer_{\ell}$. Thus, during iteration $\ell' = \ell + 1$, the node $v_{\ell}$ beeps its color $c_\Out(v_{\ell})$ (Line~\ref{line:single:online - beep color}). Theorem~\ref{thm:preprocessing} ensures that there exists exactly one node $u \in \SP_J \cap\, \neigh_{v_{\ell}} \cap\, \Layer_{\ell'}$ that has $c_\Out(v_{\ell}) \in cs_\Out(u)$. Also note that no other node in $\Layer_{\ell}$ is active (by the induction hypothesis), thus $u$ hears the color $c_\Out(v_{\ell})$ without any interference and becomes active. This satisfies the induction claim for $\ell'$.
\end{proof}

The above lemma immediately leads to proving our first main theorem.

\begin{proof}[Proof of Theorem \ref{thm:singlePathConstruction}]
The round complexity and correctness of the path construction phase both follow from Theorems~\ref{thm:wakeup} and~\ref{thm:preprocessing}, and Lemma~\ref{lem:thirdPhasePath}.
\end{proof}


\section{Shortest-path Tree Construction}
\label{sec:multiple}

In this section we generalize the task of constructing a single shortest path (described in Section~\ref{sec:single}) to the case where we wish to obtain multiple shortest path, one per \destination, that together form a tree rooted at the \source. We describe the tree construction phase which is invoked by all $\SP_J$ nodes in synchrony, upon completion of the preprocessing phase. The pseudo-code of the tree construction phase is depicted in Algorithm~\ref{alg:multi:online}. After the completion of this phase, we are guaranteed with the statement of Theorem~\ref{thm:MultiplePathConstruction}, which we now recall. 

\begin{algorithm}[th]
\caption{Tree construction phase (for node $v$)}
\label{alg:multi:online}
\begin{algorithmic}[1]
\State \textbf{Init:}  $state_v \gets \isactive$ if $v \in Y_J$; otherwise 
 $state_v \gets \notactive$
\Statex

\For {$\ell = 1$ to~$j_{\max}$}
    \If {$state_v = \isactive$ and $d(v,s)=j_{\max} - \ell + 1$}
        \State during the next $k=O(\log^2 n)$ rounds, beep only in the $\Col_\In(v)$-th round  
        \label{line:multi:online - beep color}
    \Else
        \State listen for $k=O(\log^2 n)$ rounds
        \State set $\hat c = \{ i  \mid \text{ a beep was heard in round }i\}$
        \If {$\hat c \cap \ColSet_\In(v) \ne \emptyset$}
            $state_v \gets \isactive$
        \EndIf
    \EndIf   
\EndFor

\State output~1 if $state_v = \isactive$; otherwise output~0. 
\end{algorithmic}
\end{algorithm}

\MultiplePathMainThm*

Towards the above goal,
each \destination in~$Y_J$ initiates the construction of a single path that grows from that \destination towards the \source.
Paths that start at the furthest \destinations in $Y_J$ start growing first, one layer per iteration. In each new iteration, new paths may start growing (if a layer with \destinations in $Y_J$ is reached) but also multiple shortest paths may merge together by reaching the same node.
This process continues until all shortest paths merge together at the \source, resulting in a $Y_J$-spanning shortest-path tree rooted at~$s$.

More precisely, the tree is constructed in $j_{\max} = \max J$ iterations of $O(\log^2 n)$ rounds. In iteration $1 \leq \ell \leq j_{\max} $, any active node $v_a$ in $\Layer_{j_{\max} - \ell +1}$ beeps the unary encoding of its color $c_\In(v_{a})$: that is, node $v_a$ beeps in round $c_\In(v_{a})$ of the iteration. Then, every node $u \in \Layer_{j_{\max} - \ell}$ that hears a color that belongs in its color set $cs_\In(u)$ becomes active. (Note that here, $u$ may hear multiple such colors.) Theorem~\ref{thm:preprocessing} guarantees that for any $v_a \in \Layer_{j_{\max} - \ell +1}$, exactly a single node in $\SP_J \cap\, \neigh_{v_a} \cap \Layer_{j_{\max} - \ell}$ becomes active. 
After $j_{\max}$~iterations, all shortest paths (where some may have merged together) reach the \source.
All the nodes that have become active during this process output~$1$ and form the $Y_J$-spanning shortest-path tree,
whereas all the other nodes output~$0$.

\begin{lemma}
\label{lem:thirdPhaseTree}
The tree construction phase (Algorithm~\ref{alg:multi:online}) takes $O(D \log^2 n)$ rounds. Upon completion, the set of nodes $\{v \in V \mid z_v = 1\}$ forms a $Y_J$-spanning shortest-path tree rooted at $s$.  
\end{lemma}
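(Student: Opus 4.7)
The round complexity bound will be immediate: the outer loop runs for $j_{\max} = O(D)$ iterations, each consisting of $k = O(\log^{2} n)$ rounds of beeping or listening, yielding $O(D \log^{2} n)$ rounds in total.

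For correctness, my plan is to first extract, from the HBD solution, the specific shortest-path tree that the algorithm is implicitly constructing. Using the uniqueness clause of Theorem~\ref{thm:preprocessing}, for each $v \in \SP_{J} \setminus \{s\}$ at layer $i = d(s, v)$ there is exactly one node $p(v) \in \SP_{J} \cap \neigh_{v} \cap \Layer_{i - 1}$ satisfying $\Col_{\In}(v) \in \ColSet_{\In}(p(v))$; taking $p(v)$ as the parent of $v$ produces a spanning shortest-path tree $\tilde T$ of $G(\SP_{J})$ rooted at $s$. The subtree $\tilde T_{Y_{J}}$ obtained as the union of the $y$-to-$s$ paths in $\tilde T$ over all $y \in Y_{J}$ is then by construction a $Y_{J}$-spanning shortest-path tree rooted at $s$, and the goal of the proof is to show that the set of nodes ending up with $z_{v} = 1$ is precisely $\tilde T_{Y_{J}}$.

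The heart of the argument will be an induction on $\ell = 0, 1, \ldots, j_{\max}$ establishing the invariant that, at the end of iteration $\ell$, a node $v \in \SP_{J}$ is active if and only if either $v \in Y_{J}$ or $v$ is an ancestor in $\tilde T$ of some $y \in Y_{J}$ with $d(s, v) \geq j_{\max} - \ell$. The base case $\ell = 0$ is immediate since only $Y_{J}$ is initialized active. For the inductive step from $\ell - 1$ to $\ell$, I would focus on the beeping layer $L = \Layer_{j_{\max} - \ell + 1}$: by the inductive hypothesis, the active nodes in $L$ are exactly $(Y_{J} \cap L)$ together with the $\tilde T$-ancestors at layer $j_{\max} - \ell + 1$ of higher-layer destinations. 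Each such beeper $v$ emits a beep in round $\Col_{\In}(v)$, and by the very definition of $p(v)$ its parent $p(v) \in \Layer_{j_{\max} - \ell}$ hears a beep in a round belonging to $\ColSet_{\In}(p(v))$, hence becomes active; conversely, any newly activated $u \in \SP_{J} \cap \Layer_{j_{\max} - \ell}$ must, by the uniqueness clause of Theorem~\ref{thm:preprocessing}, equal $p(v)$ for some beeping $v \in L$, so it is itself a $\tilde T$-ancestor of a destination. This re-establishes the invariant, and specializing it to $\ell = j_{\max}$ yields that the final active set coincides with $\tilde T_{Y_{J}}$.

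The hard part, compared to the single-path case (Lemma~\ref{lem:thirdPhasePath}), is that many nodes in $L$ may beep simultaneously and several of them may share the same parent in $\tilde T$, so the binary-encoding shortcut used in Algorithm~\ref{alg:singlePathConstruction} is no longer viable. This is exactly what the unary encoding over the $k$-round window is designed to neutralize: each color occupies its own dedicated round, so collisions at a shared parent simply translate to that parent hearing its color from several of its $\tilde T$-children and becoming active once, while the uniqueness clause of Theorem~\ref{thm:preprocessing} ensures that no other $\SP_{J}$ node in $\Layer_{j_{\max} - \ell}$ is erroneously activated by the beeps.
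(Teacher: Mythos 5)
Your proposal is correct and follows essentially the same route as the paper's proof: an induction over the iterations $\ell$, using the uniqueness clause of Theorem~\ref{thm:preprocessing} to show that exactly the right node in $\Layer_{j_{\max}-\ell}$ is activated, and the unary encoding to argue that simultaneous beepers cannot corrupt this. The only difference is presentational --- you extract the parent function $p(\cdot)$ and the target tree $\tilde T$ up front and phrase the invariant via $\tilde T$-ancestors, whereas the paper carries the forest-of-shortest-paths structure directly inside the induction hypothesis; the underlying argument is the same.
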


\begin{proof}
Bounding the round complexity is straightforward: The outer loop runs for $j_{\max}=O(D)$ iterations, where each iteration takes $O(\log^2 n)$ rounds. 

Next, we prove by induction on $0 \leq \ell \leq j_{\max}$ that at the end of iteration $\ell$, the active nodes in $\Layer_{L}$---where $L = \{j_{\max} - \ell,\ldots,j_{\max}\}$ and $\Layer_{L} \triangleq \bigcup_{j\in L}\Layer_j$---form (1) a forest, with roots in $R_\ell \subseteq \SP_J \cap\, \Layer_{j_{\max} - \ell}$, that is also (2) the union of shortest paths from every \destination in $Y_J \cap \Layer_{L}$ to (arbitrary nodes in) $R_\ell$. Importantly, the induction statement for $\ell = j_{\max}$ implies that the set of nodes $\{v \in V \mid z_v = 1\}$ forms a $Y_J$-spanning shortest-path tree rooted at~$s$. 

For the base case $\ell = 0$ (i.e., the initial state), the set of \destinations in $\Layer_{j_{\max}}$ satisfies (1) and (2) trivially. Next, assume that the above holds for some $0 \leq \ell < j_{\max}$ and consider the next iteration, with $\ell'=\ell+1$. By the induction hypothesis, the active nodes in $\Layer_{L}$ form (1) a forest $F_\ell$, with roots $R_\ell \subseteq \SP_J \cap\, \Layer_{j_{\max} - \ell}$, that is also (2)
the union of shortest paths from every \destination in $Y_J \cap \Layer_{L}$ to~%
$R_\ell$.
Consider some root $v_a \in R_\ell$ of that forest. Note that $v_a \in \Layer_{j_{\max} - \ell}$. Hence, $v_a$ is active at the end of the $\ell$-th iteration (and beyond).
Thus, during iteration $\ell' = \ell + 1$, the node $v_{a}$ beeps its color $c_\In(v_{a})$ (Line~\ref{line:multi:online - beep color}). Theorem~\ref{thm:preprocessing} ensures that there exists a single node $u \in \SP_J \cap\, \neigh_{v_{a}} \cap \Layer_{j_{\max} - \ell'}$ that has $c_\In(v_{a}) \in cs_\In(u)$.
%
Note that although other nodes in $\Layer_{\ell}$ may be active, beeping the unary encoding of $c_\In(v_{j_{\max}-\ell})$ ensures that $u$ hears the color $c_\In(v_{j_{\max}-\ell})$ and becomes active.

Let $L' = \{j_{\max} - \ell',\ldots,j_{\max}\}$ and $\Layer_{L'} \triangleq \bigcup_{j\in L'}\Layer_j$. The set of active nodes in $\Layer_{L'} \setminus \Layer_L$ are the new root nodes, denoted by $R_{\ell'}$, and can be separated into (a) the (aforementioned) nodes which become active by hearing a color, and (s) \destinations in $\Layer_{j_{\max} - \ell'}\setminus R_{\ell'}$, which are active from the start, and do not extend an already-constructed path initiated at some \destination in $Y_J\cap \Layer_L$. 
Note that all trees in $\Layer_{L}$ are node-disjoint branches of the first set (a) of active nodes.
Hence, the set of active nodes in $\Layer_{L'}$ forms a forest, rooted in $R_{\ell'}$ that is also the union of (a) for every \destination $y \in Y_J \cap \Layer_{L}$, a shortest path from $y$ to (an arbitrary node in) $R_{\ell'}$ and (s) for every \destination $y \in (Y_J \cap \Layer_{j_{\max} - \ell'})\setminus R_{\ell'}$, if there are any, a 0-length shortest path. 
This satisfies the induction claim for $\ell'$.
\end{proof}

The above lemma immediately leads to proving our second main theorem.
\begin{proof}[Proof of Theorem \ref{thm:MultiplePathConstruction}]
The round complexity and correctness of the tree construction phase both follow from Theorems~\ref{thm:wakeup} and~\ref{thm:preprocessing}, and Lemma~\ref{lem:thirdPhaseTree}.
\end{proof}

\section{Wake-up Phase} 
\label{sec:wakeUp}\label{sec:wakeup}
 
Due to our unsynchronized wake-up assumption, nodes may start the wake-up phase in different (global) rounds. 
Moreover, at the onset of the computation, nodes have very little information about the network's topology. 
The wake-up phase takes care of these issues. That is, upon completion of the phase, we are guaranteed with the statement of Theorem~\ref{thm:wakeup}, which we now recall.

\wakeupThm*

\begin{algorithm}[htp]
\caption{Wake-up phase (for node $v$)}
\label{alg:wakeup}
\begin{algorithmic}[1]
\Statex \textbf{First subphase:}
\State beep in the first round and stay idle for the next round  \Comment{Upon wake up} \label{line:wakeupBeep}

\If{$v = s$} \Comment{$s$ starts the synchronization beep wave.}
    \State beep in round 4 and stay idle in rounds 3,~5 and 6  \label{line:baseBeep}
\Else
    \State listen, starting from round 3, until a beep is heard in some round $r$ \label{line:relayBeep1}
    \State beep in round $r+1$ and stay idle for the next two rounds \label{line:relayBeep2}
\EndIf

\Statex
\Statex \textbf{Second subphase:}
\For{each triplet $i \geq 1$ of consecutive rounds $r_i$,~$r_i+1$,~$r_i+2$} \label{line:startSimulation} \Comment{triplet $i$ simulates step $i$ of $v$}
    \If{$v$ beeps in step~$i$ of  \textsc{EstimateDiameter}} 
        \State beep (only) in round $r_i+1$ 
        \label{line:simulationBeep}

    \Else \Comment{$v$ listens in step $i$ of \textsc{EstimateDiameter}}
        \State listen in rounds $r_i$, $r_i+1$, and $r_i+2$ \label{line:simulateHear1}
        \If{one beep or more were heard} 
            \State record a beep for step~$i$ in \textsc{EstimateDiameter} simulation \label{line:simulateHear2}
        \EndIf
    \EndIf
    \If{the \textsc{EstimateDiameter} simulation terminate at the end of step~$i$}
        \State exit the loop at the end of round $r_i+2$. \Comment{$v$ now knows $d_{\max}$ and $\Distance(s,v)$.}
    \EndIf
\EndFor \label{line:endSimulation}
\State stay idle for $d_{\max} - \Distance(s,v)$ rounds
\label{line:waitToSynch}

\Statex
\Statex \textbf{Third subphase:}
\For{$j=1$ to $d_{\max}$}
    \State $sp_v[j] \gets 0$
\EndFor
\State invoke the reverse beep wave primitive for~$3(4d_{\max}-2)$ rounds, and
\For{triplet $r = 1$ to~$4d_{\max}-2$ during the reverse beep wave primitive}  
    \If{$v \in Y$ and $r = 3(\Distance(v,s)-1)+1$} \label{line:initiateReverseBeepWave}
        \State initiate a reverse beep wave in triplet $r$
        \State $sp_v[\Distance(v,s)] \gets 1$ \label{line:setSPdDestination}
    \Else
        \If{$v$ relays a beep in triplet $r$}  
            \State $sp_v[(r + 2 + \Distance(s,v))/4] \gets 1$  \label{line:setSPd}
        \EndIf
    \EndIf
\EndFor

\Statex
\Statex \textbf{Fourth subphase:}
\If {$v = s$} 
    \State decide on $J\subseteq [d_{\max}]$ 
    \Comment{Via the distance policy~$\pi$}
    \State broadcast $J$ using beep waves       \Comment{$O(D)$ rounds}
\Else
    \State receive $J$ by listening to the beep waves   \Comment{$O(D)$ rounds}
\EndIf
\If {$v\not\in \SP_J$}
    output~$0$ and terminate 
    \label{line:terminateSPhase2}
\EndIf
\end{algorithmic}
\end{algorithm}


\subsection{Detailed Description} 
The wake-up phase consists of four subphases, which we now describe; see Algorithm \ref{alg:wakeup} for the complete pseudo-code.

In the first subphase, every node beeps upon waking up, which in turn wakes any non-awake neighbor of this node. This part guarantees that all nodes wake up within $D$~rounds. 
(Note, however, that $D$ is unknown at this point, and nodes do not count on~$D$ to move to the next subphase.)
Once the \source~$s$ wakes up, it initiates one more beep that is relayed by all the other nodes throughout the network, effectively creating a \emph{synchronization beep wave}: each node that hears this second beep relays it and immediately switches to the second subphase. Therefore, nodes at the same distance to the \source start the second subphase at the same (global) round. (See Section~\ref{sec:prelim} to recall the beep wave primitive description.)


As a corollary of the above, 
neighboring nodes start the second subphase within at most one global round of each other. Under this condition, nodes can simulate, in a simple fashion, any algorithm that assumes nodes have a global clock (Algorithm~\ref{alg:wakeup}; cf.~\cite{FSW14,DBB18,HMP20}). Hence, in the second subphase, nodes simulate a slightly modified version of the \textsc{EstimateDiameter} procedure presented in~\cite{CD19bc}, and presented in Algorithm \ref{alg:estimateDiameter} for completeness. (For clarity, each simulated round of \textsc{EstimateDiameter} is called a \emph{step}.)
The simulation of this (modified) procedure ensures that upon completion, all nodes have learned their distance to the \source~$\Distance(s,v)$ and the eccentricity of the \source~$d_{\max}$. Furthermore, the simulation of \textsc{EstimateDiameter} takes exactly the same number of steps for all nodes. After completing the simulation, each node $v$ waits an additional $d_{\max} - \Distance(s,v)$ rounds before finishing the second subphase, which compensates for its time difference with respect to the time of the \source and guarantees that all the nodes start the third subphase exactly at the same global round. 

Note that from the third subphase onwards, nodes start simultaneously. The third subphase---the distance gathering scheme---computes the sets $\SP_i$  in parallel for every $i \in [d_{\max}]$.
Recall that $\SP_i$ is the set of nodes that reside on some shortest path between $s$ and some \destination in $\Layer_i$.
More precisely, each node $v$ computes a $d_{\max}$-size array $sp_v[\cdot]$ such that, for every $i \in [d_{\max}]$, $sp_v[i] = 1$ if and only if $v \in \SP_i$. 
Computing a single set $\SP_i$, for an a priori known~$i$, is simple. A reverse beep wave (described in Section \ref{sec:prelim}) is initiated by all \destinations in~$Y \cap \Layer_i$ and relayed by all nodes in~$\SP_i$.
Indeed, one can see that the paths along which the reverse beep wave propagates, $\{(y,v_{i-1}),\ldots,(v_1,s) \mid \forall y \in Y\cap\Layer_i, \forall j \in [i-1], v_j \in \Layer_j\}$, are exactly the shortest paths between $s$ and some \destination in $\Layer_i$. 

However, it is more tricky to compute all the sets $\SP_i$ in parallel.
To do so, we pipeline $d_{\max}$ reverse beep waves. The first reverse beep wave is initiated by all the \destinations in~$\Layer_1$ in the first round, the second reverse beep wave by all \destinations in~$\Layer_2$ three triplets later (i.e., in round~10), and so on. (By separating consecutive reverse beep waves by three triplets, reverse beep waves do not interfere with each other.) 
Intermediate nodes deduce 
which reverse beep wave they currently relay
from the triplet in which they hear the beep. In particular, when $v$ relays a beep in triplet~$r$, then $r = 3(i-1) + 1 + i - \Distance(s,v)$ for some $i \in [d_{\max}]$;
the $i$-th beep wave starts in triplet $3(i-1) + 1$ and moves one layer towards the \source per subsequent triplet. Hence, whenever $v$~relays a beep in triplet~$r$, it computes $i = (r + 2 + \Distance(s,v))/4$ and concludes that~$v \in \SP_i$.


In the fourth and final subphase, the \source decides on some set of distances $J \subseteq [d_{\max}]$ by applying the distance policy, $J=\pi(I)$ with $I = \{ 1 \leq i \leq n \mid Y \cap \Layer_{i} \neq \emptyset \}$. We will assume that $Y_J$ is non-empty; otherwise, no node will execute the following phases and thus no path will be constructed (recall that  $Y_J = \bigcup_{j\in J} Y\cap \Layer_j$).
The \source then broadcasts~$J$ to all nodes using $d_{\max}$ consecutive beep waves. 
More precisely, $s$ broadcasts the (indicator function) string $x = x_1 x_2 \ldots x_{d_{\max}}$, where $x_i = 1$ if and only if $i \in J$, from which $J$ can be extracted trivially. Finally, nodes which do not belong in $\SP_J$ turn themselves off and terminate with output~$0$. Consequently, the following phases are executed only by nodes in $\SP_J = \bigcup_{j\in J}\SP_j$.

\subparagraph*{Diameter Estimation} 
For completeness, we give the pseudo-code of the modified version of \textsc{EstimateDiameter} (from~\cite{CD19bc}), used in the second subphase, as well as the corresponding proofs. 

\begin{algorithm}[htp]
\caption{\textsc{EstimateDiameter}~\cite{CD19bc} (for node $v$)}
\label{alg:estimateDiameter}
\begin{algorithmic}[1]
\State $d_v \gets 0 , d_{max} \gets 0$

\If {$v = s$} 
    \State beep in the first step \label{alg:beepFirst}
    \State listen until $v$ does not hear a beep in three consecutive steps: $r-3$, $r-2$ and $r-1$ \label{alg:beepRevWait}
    \State $d_{max} \gets r/3 - 1$ \label{alg:beepRevCompute}
    \State broadcast $d_{max}$ using beep waves
\Else
    \State listen in all steps, until the first step $r'$ in which $v$ hears a beep
    \State $d_v \gets r'$
    \State beep in the next step $r'+1$  \label{alg:beepNext}
    
    \For{each triplet of steps $r'+3k-1$, $r'+3k$, $r'+3k+1$, where $k \geq 1$} \label{alg:beepRev} 
        \State listen in step $r'+3k-1$ \Comment{Listen for a reverse wave} \label{alg:listenRev}
        \State stay idle for step $r'+3k$
        \If{beep heard (during step $r'+3k-1$)}
            \State beep in step $r'+3k+1$ \Comment{Relay two steps later to avoid interference} \label{alg:beepSecond}
        \Else 
            \State stay idle in step $r'+3k+1$ and exit the for loop \label{alg:beepRev2}
        \EndIf
    \EndFor
    \State receive $d_{max}$ using beep waves \label{alg:beepRevReceive}
\EndIf

\State stay idle for $d_{max}-d_v$ steps \Comment{For simultaneous termination}
\State output $(d_v,d_{max})$ and terminate 

\end{algorithmic}
\end{algorithm}

\begin{lemma}
\label{lem:estimateDiameter}
\textsc{EstimateDiameter} assumes that all nodes wake up simultaneously and takes $O(d_{max})$ steps. Upon completion, every node $v$ knows $\Distance(v,s)$ and $d_{\max}$. Moreover, all nodes terminate \textsc{EstimateDiameter} in the same step.
\end{lemma}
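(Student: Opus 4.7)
The plan is to prove the lemma in three stages corresponding to the algorithm's three phases: the outward (forward) beep wave that assigns distances, the reverse beep wave that enables $s$ to measure $d_{\max}$, and the broadcast-then-pad step that synchronizes termination. The argument throughout rests on the clean residue-mod-$3$ structure of beep steps: a node at distance $d$ only ever beeps at steps congruent to $d+1 \pmod 3$, which rules out interference between the forward and reverse waves and between distinct reverse-wave iterations.

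First I would show by induction on the global step $t$ that, by the end of step $t$, every node at distance $\le t$ has heard a beep, and that a node $v$ with $d(v,s)=d$ first hears a beep in step $d$ and beeps in step $d+1$. The base case $t=1$ is immediate from $s$'s initial beep. For the inductive step, a node $v$ at distance $t$ sees its distance-$(t-1)$ neighbors beep at step $t$ (forward), its distance-$t$ neighbors still waiting and thus silent, and its distance-$(t+1)$ neighbors not yet triggered; moreover any reverse-wave beep from a closer neighbor $u$ at distance $d'\le t-2$ occurs only at steps $\ge d'+4$, and a residue-mod-$3$ check shows these cannot coincide with step $t$. Hence every $v\ne s$ correctly assigns $d_v=d(v,s)$.

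Next I would study the reverse wave via the quantity $K(v)$, defined as the maximum $k\ge 0$ for which $v$ beeps in step $d_v+1+3k$. Because beeps of nodes at distance $d$ live in one residue class mod $3$ and the neighbors of $v$ have distances in $\{d-1,d,d+1\}$, when $v$ listens in step $d+3k-1$ the only candidate beepers are its distance-$(d+1)$ neighbors executing reverse-wave iteration $k-1$. Inductively this gives $K(v)\ge k$ iff some such neighbor $u$ has $K(u)\ge k-1$, and unrolling along a longest monotone-distance path from $v$ yields $K(v)=d_{\max}-d(v,s)$. Applied to neighbors of $s$ (distance $1$), this pins their beep steps to $\{2,5,\dots,3d_{\max}-1\}$, so $s$ sees its first triplet of silent steps exactly at $\{3d_{\max},3d_{\max}+1,3d_{\max}+2\}$ and computes $d_{\max}=r/3-1$ correctly, at global step $3d_{\max}+3$.

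Finally, I would use pipelined beep waves to argue that $s$ broadcasts the $O(\log d_{\max})$-bit value $d_{\max}$ to every node in $O(d_{\max})$ additional steps, and that the ``idle for $d_{\max}-d_v$ steps'' epilogue synchronizes termination: since the propagation delay from $s$ to $v$ is exactly $d_v$, padding by $d_{\max}-d_v$ turns the heterogeneous reception times into a common global step. Summing the three stages gives the $O(d_{\max})$ bound. The main obstacle is the non-interference analysis in the second stage: one must chase through the recursive definition of $K(v)$ and verify that in every listening round each node hears exactly the intended set of neighbors with no cross-contamination from the forward wave or from reverse-wave beeps at other iterations. The residue-mod-$3$ bookkeeping makes this clean, but it is the delicate technical point of the proof.
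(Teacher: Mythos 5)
Your proposal follows the same three-stage route as the paper's proof: a forward-wave induction giving $d_v=\Distance(v,s)$, a mod-$3$ non-interference analysis of the reverse relay beeps showing that $s$ hears beeps exactly at steps $2,5,\dots,3d_{\max}-1$ and hence computes $r=3(d_{\max}+1)$, and the broadcast-plus-padding argument for simultaneous termination. The one step that is wrong as written is the identity $K(v)=d_{\max}-\Distance(v,s)$. Unrolling your recursion gives that $K(v)$ equals the length of the longest path starting at $v$ along which the distance to $s$ increases by one at every hop, and this can be strictly smaller than $d_{\max}-\Distance(v,s)$: a pendant neighbor of $s$ has $K=0$ no matter how large $d_{\max}$ is, so individual neighbors of $s$ need not beep at all of $\{2,\dots,3d_{\max}-1\}$. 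What you actually need --- and what the recursion does deliver --- is (i) that each node's set $\{k : v \text{ beeps at step } d_v+1+3k\}$ is an initial segment (prefixes of monotone paths are monotone), so the source sees no premature silent triplet, and (ii) that $\max\{K(u) : \Distance(u,s)=1\}=d_{\max}-1$, the upper bound because monotone paths stay within layers $0,\dots,d_{\max}$ and the lower bound witnessed by a shortest path from $s$ to a node realizing the eccentricity. With that repair the computation of $r$ and $d_{\max}$ goes through unchanged. Two minor further points: in your forward-wave induction, a neighbor of a node at distance $t$ cannot lie at distance $\leq t-2$, so that interference case is vacuous (the relevant reverse-wave beeps to exclude are those of the distance-$(t-1)$ neighbors, at steps $t+3k$ for $k\geq 1$); and for the final stage one should also verify, as the paper asserts, that every node has exited its reverse-wave loop (by step $3d_{\max}-2\Distance(v,s)+4\leq 3d_{\max}+2$) before the source begins broadcasting at step $\geq 3d_{\max}+3$.
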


\begin{proof}
The \source $s$ beeps in the first step. By a simple induction (and Line~\ref{alg:beepNext} of Algorithm~\ref{alg:estimateDiameter}), all nodes at distance $0 \leq i \leq d_{max}$ (from $s$) hear their first beep in step $i$ and relay this beep in step $i+1$. Hence, for every node $v \in V$, the output $d_v$ is indeed $\Distance(v,s)$ (even for $s$ due to the initialization).

The above mentioned relay beeps start $d_{max}$ consecutive ``reverse beep waves'' (slightly different from those described in Section \ref{sec:prelim}). More precisely, nodes at distance $i$ from $s$ start the $i$th reverse beep wave in step $i+1$ through the relay beeps. To prove this, first consider the following claim: for any node $v \neq s$, after its first beep, node $v$ beeps in step $r = \Distance(v,s)+3k+1$ (for some $k \geq 1$) if and only if one of its neighbors at distance $\Distance(v,s)+1$ from $s$ beeps in step $r-2$. To show the claim, it suffices to show that no other neighbor of $v$ may beep in step $r-2$. (The rest follows from the algorithm description.) This is clearly true for neighbors at distance $\Distance(v,s)$ from $s$ (see Line~\ref{alg:listenRev} of Algorithm~\ref{alg:estimateDiameter}). As for the other neighbors of $v$, they must be at distance $\Distance(v,s)-1$ from $s$ and thus they can only beep in steps $\Distance(v,s)$ and $\Distance(v,s)+3k$ (for $k \geq 1$). This proves the claim.
Moreover, note that $r = \Distance(v,s)+3k+1 = (\Distance(v,s) - 1) +3(k+1)-1$ (for any $k \geq 0$). Combined with the above claim, it is clear that the relay beeps produce reverse beep waves
which take two steps to propagate one layer. In addition, each new beep wave starts one step later and one layer further from the \source. Hence, the \source receives a reverse beep wave every three steps: i.e., in steps $2 + 3k$, for $0 \leq k \leq d_{max}-1$. Lines \ref{alg:beepRevWait}--\ref{alg:beepRevCompute} of Algorithm~\ref{alg:estimateDiameter} ensure that the \source correctly computes 
$r = 2 + 3 (d_{max}-1) + 4 = 3 (d_{max} + 1)$ and $d_{max}$.

Let $r_s$ be the step in which the \source starts broadcasting $d_{max}$. (This is done using $d_{max}$ consecutive beep waves---see the beep waves primitive's description in Section \ref{sec:prelim}.)
Note that in all steps $r'' \geq r_s$, all other nodes are executing Line~\ref{alg:beepRevReceive} of Algorithm~\ref{alg:estimateDiameter}: that is, listening and waiting for the beep wave. Hence, all nodes correctly receive $d_{max}$. 
In fact, each node $v \in V$ terminates the broadcast in step $r_s + 3 (d_{max}+1) + \Distance(v,s)$. (Each node detects a beep wave is the last when it does not hear a beep in the subsequent three steps.) Since node $v$ terminates Algorithm~\ref{alg:estimateDiameter} exactly $d_{max}-d_v = d_{max}-\Distance(v,s)$ steps afterward, all nodes terminate simultaneously. This completes the proof. 
\end{proof}

\subsection{Analysis of the Wake-up Phase} First, we prove Lemmas~\ref{lem:firstSubphaseFirstPhase}--\ref{lem:fourthSubphaseFirstPhase} for the subphases' guarantees. These lemmas lead up to a proof of Theorem~\ref{thm:wakeup}. 

\begin{lemma}
\label{lem:firstSubphaseFirstPhase}
The first subphase takes $O(D)$ rounds. Upon completion, all nodes have woken up. Moreover, for every $i \in [d_{max}]$, every node $v \in \Layer_i$ complete the first subphase simultaneously. Hence, every two neighboring nodes $u$ and $v$ complete within one global round of each other.
\end{lemma}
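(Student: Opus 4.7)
The plan is to track two propagating waves — the wake-up wave triggered by Line~\ref{line:wakeupBeep} and the synchronization wave triggered by Line~\ref{line:baseBeep} — and establish the three claims (wake-up within $O(D)$ rounds, per-layer simultaneity, and one-round spread between adjacent layers) by induction on distance from $y_1$ and from $s$, respectively.

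First I would deal with the wake-up wave. Since every node beeps at its local round~$1$ (Line~\ref{line:wakeupBeep}), a simple induction on $d(y_1,\cdot)$ shows that every node $v$ wakes up at some global round $t_v$ with $t_v \le t_{y_1} + d(y_1,v) \le t_{y_1} + D$. A key byproduct, which I will use heavily in the sync-wave analysis, is that for any edge $(u,v)$ we have $t_u \le t_v + 1$: either $u$ was already awake when $v$ beeped at global round $t_v$, or $u$ is awakened by $v$'s wake-up beep and starts its own local round~$1$ the following global round. In particular, $s$ wakes up at some global round $t_s \le t_{y_1} + D$.

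Next I would analyze the synchronization wave. Let $T \triangleq t_s + 3$ denote the global round in which $s$ emits its second beep (the sync beep, Line~\ref{line:baseBeep}). The goal is to prove by induction on $i \ge 1$ that every $v \in \Layer_i$ hears the sync wave exactly at global round $T + i - 1$, interprets it as the round~$r$ of Line~\ref{line:relayBeep1}, and relays it at global round $T + i$. The main obstacle — and essentially the only non-routine step — is ensuring that $v$ does not mistake some straggling Line~\ref{line:wakeupBeep} beep for the sync beep. This is where the bound $t_u \le t_v + 1$ pays off: every neighbor $u$ of $v$ emits its sole wake-up beep at global round $t_u \le t_v + 1$, which corresponds to $v$'s local round at most~$2$, so it occurs before $v$ even starts listening (Line~\ref{line:relayBeep1} begins at local round~$3$). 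Conversely, $v$ is already listening when the sync wave arrives at global round $T + i - 1$, because the wake-up wave emanating from $s$ itself reaches $\Layer_i$ by global round $t_s + i$, so $t_v \le t_s + i = T + i - 3$, placing $v$'s local round at that moment at least~$3$. With no spurious beep to confuse $v$ and with $v$ guaranteed to be in listening mode, the induction step is immediate from Lines~\ref{line:relayBeep1}--\ref{line:relayBeep2}.

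From the induction, the completion round of any $v \in \Layer_i$ with $i \ge 1$ is $T + i + 2$ (one relay beep followed by two idle rounds), while the source completes at its local round~$6$, i.e., at global round $t_s + 5 = T + 2$. This gives (ii) — all nodes in $\Layer_i$ complete simultaneously — and (iii) — adjacent layers are exactly one global round apart, so two neighbors, being at distance~$0$ or~$1$ in layer index, complete within one round of each other. For the round complexity, the first subphase begins at $t_{y_1}$ and ends at $T + d_{\max} + 2 = t_s + d_{\max} + 5 \le 2D + O(1)$ global rounds later, yielding the stated $O(D)$ bound.
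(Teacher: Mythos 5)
Your proof is correct and follows essentially the same route as the paper's: a wake-up wave that reaches all nodes within $D$ rounds, followed by an induction showing the synchronization beep wave advances exactly one layer per round, so that layer $i$ completes at a fixed global round depending only on $i$. The only difference is one of detail: you explicitly justify (via the bound $t_u \le t_v + 1$) why no stale wake-up beep can be mistaken for the synchronization wave once a node starts listening at local round~3, a point the paper's proof asserts with less elaboration.
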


\begin{proof}
Recall that all nodes beep in their first round (Line \ref{line:wakeupBeep} of Algorithm~\ref{alg:wakeup}). As a result, for any two neighboring nodes $u,v \in V$, the node~$v$ wakes up one round after~$u$ at the latest (or vice versa). Hence, all nodes wake up within the first $D$ global rounds. 

Denote by $r$ the global round in which the \source wakes up. The \source starts the synchronization beep wave in round $r+4$ (Line \ref{line:baseBeep} of Algorithm~\ref{alg:wakeup}). For neighbors of the \source, this beep is the first beep they hear starting from their third local round. Hence, they beep in global round $r+5$ to propagate the synchronization beep wave (Lines \ref{line:relayBeep1}--\ref{line:relayBeep2} of Algorithm~\ref{alg:wakeup}). By a simple induction, one can show that nodes at distance $i$ from the \source propagate this second beep wave in global round $r+4+i$, and thus end the first subphase in global round $r+ 7 + i$ (Lines \ref{line:baseBeep} and \ref{line:relayBeep2} of Algorithm~\ref{alg:wakeup}). 
Note that $r+ 7 + i = O(D)$ and thus every node ends the first subphase within $O(D)$ rounds.

Finally, for any $i \in [d_{max}]$ and any $v \in \Layer_i$, its neighboring nodes in $\neigh_v$ belong to $\Layer_{i-1} \cup \Layer_i \cup \Layer_{i+1}$. Hence, $v$ and any of its neighbors end within one global round of each other.
\end{proof}

\begin{lemma}
\label{lem:secondSubphaseFirstPhase}
The second subphase takes $O(D)$ rounds. Upon completion, every node $v \in V$ knows $\Distance(v,s)$ and $d_{max}$. Moreover, all nodes complete the second subphase in the same round, thus from that point on, the nodes have globally-synchronized clocks.
\end{lemma}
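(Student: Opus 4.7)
The plan is to show that the triplet-based construction in the second subphase faithfully simulates the synchronous \textsc{EstimateDiameter} procedure (Algorithm~\ref{alg:estimateDiameter}), and then to invoke Lemma~\ref{lem:estimateDiameter} for correctness while adding the final round bookkeeping needed for simultaneous termination. The foundation of the argument is Lemma~\ref{lem:firstSubphaseFirstPhase}, which guarantees two structural facts that I would use repeatedly: any two neighbors begin the second subphase within one global round of each other, and all nodes in the same layer $\Layer_i$ begin in exactly the same global round.

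For correctness of the simulation, I would fix an arbitrary pair of neighbors $u, v$ and reason about their locally-numbered triplets. Since each simulated step is assigned a triplet $[r_i, r_i + 1, r_i + 2]$ in which a beeping node beeps only in the middle round while a listening node listens in all three rounds, the $\pm 1$ round skew between $u$ and $v$ ensures that the middle round of $u$'s step-$i$ triplet always lies strictly inside $v$'s step-$i$ triplet. Hence if $u$ beeps at simulated step $i$, then $v$ (when listening) records a beep at its own simulated step $i$. Conversely, cross-step interference is ruled out because the middle rounds of consecutive triplets at $u$ differ by~$3$ globally, while $v$'s step-$i$ window has radius~$1$ around its own middle; so a beep by $u$ at simulated step $j \neq i$ cannot leak into $v$'s step-$i$ window. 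Combining these two observations, the triplet-based execution is bit-for-bit identical to a synchronous run of \textsc{EstimateDiameter}, and Lemma~\ref{lem:estimateDiameter} yields that every node $v$ correctly learns $\Distance(v,s)$ and $d_{\max}$ and that all nodes complete the simulation after the same number $T = O(d_{\max})$ of simulated steps.

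For the round count and synchronous termination, I would proceed by elementary bookkeeping. Let $t_s$ denote the global round in which the \source{} begins the second subphase. By Lemma~\ref{lem:firstSubphaseFirstPhase}, a node $v \in \Layer_i$ begins the second subphase at global round $t_s + i$; its simulation of $T$ steps occupies exactly $3T$ consecutive rounds, so $v$ finishes the simulation at global round $t_s + i + 3T$. The final padding of $d_{\max} - \Distance(s,v) = d_{\max} - i$ idle rounds (Line~\ref{line:waitToSynch} in Algorithm~\ref{alg:wakeup}) then brings the completion time to global round $t_s + 3T + d_{\max}$, which is independent of $i$. Hence all nodes complete the second subphase in the same global round, giving globally-synchronized clocks from this point on. Since $T = O(d_{\max})$ by Lemma~\ref{lem:estimateDiameter} and $d_{\max} = O(D)$, the whole subphase runs in $O(D)$ rounds.

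The main obstacle I anticipate is the first step, namely carefully ruling out cross-step interference and missed beeps despite the unsynchronized wake-up. The mid-round-only beeping combined with the $\pm 1$ global-round skew gives by a short window argument that (i) every beep by a neighbor at its simulated step $i$ lies inside the listener's step-$i$ triplet, and (ii) no neighbor's beep at any step $j \neq i$ ever contaminates the listener's step-$i$ window, because adjacent triplets are separated by at least~$2$ rounds of margin from the middle beep. Once that is in hand, the rest reduces to Lemma~\ref{lem:firstSubphaseFirstPhase}, Lemma~\ref{lem:estimateDiameter}, and the elementary round count above.
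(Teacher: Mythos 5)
Your proposal is correct and follows essentially the same route as the paper: establish that the triplet-based execution faithfully simulates \textsc{EstimateDiameter} using the $\pm 1$ skew from Lemma~\ref{lem:firstSubphaseFirstPhase} together with middle-round-only beeping, invoke Lemma~\ref{lem:estimateDiameter}, and then do the same padding bookkeeping showing every node finishes at global round $t_s + 3T + d_{\max}$. If anything, your window argument for ruling out cross-step interference is stated a bit more carefully than the paper's (which loosely claims the neighboring triplets for distinct steps do not intersect, whereas they can overlap by one round — the real point, which you make, is that the beep in the middle round cannot land in a wrong-step window).
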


\begin{proof}
During the second subphase, nodes simulate a modified version of \textsc{EstimateDiameter} (see the simulation in Lines \ref{line:startSimulation}--\ref{line:endSimulation} of Algorithm~\ref{alg:wakeup} and \textsc{EstimateDiameter} in Algorithm~\ref{alg:estimateDiameter} ). First, we prove the simulation's correctness via a proof of the following claim: for each step~$i$, some node $v$ hears a beep if and only if at least one neighboring node of $v$ beeps during step~$i$. Recall that each node $v$ simulates a step of \textsc{EstimateDiameter} by using a triplet of consecutive (local) rounds. 
Whenever $v$ beeps during step~$i$ of \textsc{EstimateDiameter}, $v$ beeps in the second local round of triplet~$i$ (Line \ref{line:simulationBeep} of Algorithm~\ref{alg:wakeup}). Since any neighboring node of $u$ starts the second subphase within one global round of $v$ (by Lemma~\ref{lem:firstSubphaseFirstPhase}), this implies that $u$ hears that beep within one of its three rounds in triplet $i$ (if $u$ listens during step~$i$). On the other hand, for the same reason, the triplet of rounds $v$ uses to simulate step~$i$ does not intersect with the triplet of rounds any neighboring node uses to simulate any step~$i' \neq i$. Hence, if no neighboring node of $v$ beeps during step~$i$, then $v$ does not hear any beep from other steps~$i' \neq i$ during its step~$i$ (nor does it hear any beep from step~$i$, clearly). This completes the proof of the claim and the simulation's correctness.

By Lemma \ref{lem:estimateDiameter}, all nodes learn their distance to the \source as well as $d_{\max}$ when the simulation of \textsc{EstimateDiameter} terminates.
Moreover, all nodes terminate the modified procedure in the same step, denoted by $t = O(D)$. 
Since each step is simulated using three rounds, each node simulates the modified procedure for exactly $3t = O(d_{\max})$ rounds.
After which, each node $v$ waits $d_{\max} - \Distance(s,v)$ rounds (Line~\ref{line:waitToSynch} of Algorithm~\ref{alg:wakeup}), which causes all the nodes to synchronize their local clocks. Indeed, every node~$v\in V$ 
completes the second subphase 
in global round $r' = (r + 7 + d(s,v)) + 3t + (d_{\max} - d(s,v)) = r+7+3t +d_{\max}= O(D)$.
\looseness=-1
\end{proof}

\begin{lemma}
\label{lem:thirdSubphaseFirstPhase}
The third subphase takes $O(D)$ rounds. Upon completion, for each $v \in V$ and $i \in [d_{\max}]$, $v$ knows if it is in~$\SP_i$, or more precisely, $sp_v[i] = 1$ if and only if $v \in \SP_i$. 
\end{lemma}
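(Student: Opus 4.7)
The plan is to prove the round complexity bound first and then establish correctness of the arrays $sp_v[\cdot]$ by an induction over the layer structure.

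For the round complexity, each node invokes the reverse beep wave primitive for exactly $3(4d_{\max}-2)$ rounds, which is $O(D)$ since $d_{\max} \leq D$. The remaining initialization in the third subphase adds only $O(d_{\max})$ further work, preserving the $O(D)$ bound.

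For correctness, I would first establish a structural fact about the sets $\SP_i$: for any $j \leq i$ and any $v \in \Layer_j$, we have $v \in \SP_i$ if and only if either (a)~$j = i$ and $v \in Y \cap \Layer_i$, or (b)~$j < i$ and $v$ has at least one neighbor $u \in \SP_i \cap \Layer_{j+1}$. The ``if'' direction of (b) follows by prepending the edge $(v,u)$ to a shortest path from $u$ to some $y \in Y \cap \Layer_i$ and using the triangle inequality to conclude $\Distance(v,y) = i - j$; the ``only if'' direction follows from the first edge along any shortest path from $v$ to such a $y$.

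Next, fix $i \in [d_{\max}]$ and consider the $i$-th reverse beep wave, initiated in triplet $3(i-1)+1$ by all \destinations in $Y \cap \Layer_i$ (Line~\ref{line:initiateReverseBeepWave}). I would prove by downward induction on $j$ from $i$ to $0$ that in triplet $r_{i,j} = 3(i-1)+1 + (i-j) = 4i-j-2$, the set of nodes at layer $j$ that beep as part of the $i$-th wave is exactly $\SP_i \cap \Layer_j$. The base case $j=i$ is given by Line~\ref{line:initiateReverseBeepWave} together with the identity $\SP_i \cap \Layer_i = Y \cap \Layer_i$. For the inductive step, the reverse beep wave primitive ensures that a node $v \in \Layer_j$ relays in triplet $r_{i,j}$ iff at least one neighbor in $\Layer_{j+1}$ beeped in triplet $r_{i,j+1}$; by the induction hypothesis this is equivalent to $v$ having a neighbor in $\SP_i \cap \Layer_{j+1}$, which by the structural fact is equivalent to $v \in \SP_i$. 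Since the algorithm assigns $sp_v[(r+2+\Distance(s,v))/4] \gets 1$ whenever $v$ relays in triplet $r$ (Line~\ref{line:setSPd}), and $(r_{i,j}+2+j)/4 = i$, this yields $sp_v[i]=1 \iff v \in \SP_i$ for all $j < i$; for the case $j = i$ the equivalence is given directly by Line~\ref{line:setSPdDestination}.

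The main obstacle is verifying that the pipelined reverse beep waves do not interfere and that the decoding $i = (r+2+\Distance(s,v))/4$ is unambiguous. The key observation is that at any fixed triplet $r$, the $i$-th wave is at layer $4i-2-r$, so distinct waves are separated by a multiple of $4$ in layer index at a given triplet. Since a node $v \in \Layer_j$ can only be affected by beeps from neighbors in $\Layer_{j-1} \cup \Layer_j \cup \Layer_{j+1}$, at most one wave touches its neighborhood in any single triplet, and the layer $j$ that $v$ occupies uniquely determines the wave index $i$ via the formula above. Combined with the built-in round separation of the reverse beep wave primitive (which assigns each layer class mod $3$ to a distinct round within the triplet, so that $v$ can tell the direction of an incoming beep), this rules out both cross-wave interference and within-wave ambiguity, and closes the proof.
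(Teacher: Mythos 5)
Your proof is correct and follows essentially the same route as the paper's: bound the rounds by the $4d_{\max}-2$ triplets, argue the pipelined waves do not interfere so each can be analyzed in isolation, show the $i$-th wave is relayed exactly by $\SP_i$, and check that the decoding formula $(r+2+\Distance(s,v))/4$ recovers $i$. The only cosmetic differences are that you establish the ``relayed exactly by $\SP_i$'' step via a local recursive characterization of $\SP_i$ and a downward induction over layers, where the paper directly identifies the propagation paths with the shortest $s$--$y$ paths, and you argue non-interference spatially (waves at a fixed triplet sit in layers differing by multiples of $4$) where the paper argues it temporally (consecutive waves are separated by three triplets at every node); these are equivalent views of the same facts.
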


\begin{proof}
The third subphase takes $4d_{\max}-2 = O(D)$ triplets, and each such triplet consists of three rounds. Consequently, the subphase takes $O(D)$ rounds.

Now, we show the correctness of the third subphase. 
First, we provide some observations regarding the reverse beep waves. For every $i \in [d_{\max}]$, let the $i$-th reverse beep wave be the wave initiated by all \destinations in $\Layer_i$ (simultaneously) in triplet $3(i-1)+1$ (Line \ref{line:initiateReverseBeepWave} of Algorithm~\ref{alg:wakeup}). (Note that by Lemma~\ref{lem:secondSubphaseFirstPhase}, all \destinations know which layer they are in.) Then, a simple induction on $i$ shows that the $i$-th reverse beep wave is separated from all other beep waves (and in particular, the $i+1$th beep wave) by three triplets at every node. Hence, no two waves interfere with each other. From now on, for some $i \in [d_{\max}]$, we can consider the $i$-th wave independently from all other waves. 

Next, we prove that (1) any node $v$ that participates in the $i$-th reverse beep wave sets $sp_v[i]$ to 1, and (2) node $v$ participates in the $i$-th reverse beep wave if and only if $v \in \SP_i$. 
For (1), the $i$-th wave is initiated by \destinations in $\Layer_i$ in triplet $3(i-1)+1$ and moves one layer inwards per triplet. Since the first subphase runs for $4 d_{\max} - 2 \geq 3(i-1)+1 + d_{\max}$ rounds, the inwards propagation continues until the wave reaches $s$. Thus, for any $j \in [i]$ and any node $v$ in layer $\Layer_{i-j}$ that relays the wave, $v$ does so in triplet $3(i-1)+1+j$. When that happens, the index computed by $v$ is $(3(i-1)+1+ i - \Distance(s,v) + 2 + \Distance(s,v))/4 = i $ and thus $v$ sets $sp_v[i]$ to~$1$ (Line~\ref{line:setSPd} of Algorithm~\ref{alg:wakeup}). (Note that any \destination in~$\Layer_i$ sets $sp_v[i]$ to~$1$ after initiating the wave---Line~\ref{line:setSPdDestination} of Algorithm~\ref{alg:wakeup}.)

For (2), consider the paths along which the wave propagates: $\{(y,v_{i-1}),\ldots,(v_1,s) \mid \forall y \in Y \cap \Layer_i, \forall j \in [i-1], v_j \in \Layer_j\}$. These are exactly the shortest paths between~$s$ and some \destination in~$\Layer_i$. Indeed, first there cannot be any shorter path between~$s$ and some \destination in~$\Layer_i$, since $\Distance(s,y) = i$ (by definition of $\Layer_i$). Second, there cannot be a shortest path $p = \{(y,v_{i-1}),\ldots,(v_1,s)\}$, between $s$ and $y$ and of length $i$, that does not satisfy $\forall j \in [i-1], v_j \in \Layer_j$. Such a path necessarily has two consecutive nodes in the same layer and hence can be transformed into a path between $s$ and $y$ of length $i-1$, which also contradicts $\Distance(s,y) = i$. \looseness=-1
\end{proof}

\begin{lemma}
\label{lem:fourthSubphaseFirstPhase}
The fourth subphase takes $O(D)$ rounds. Upon completion, each node $v \in V$ knows $J$ and whether $v \in \SP_J$. Nodes in~$\SP_J$ continue to the next phase while the rest of the nodes terminate with output~$0$.
\end{lemma}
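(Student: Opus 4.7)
The plan is to verify the two assertions of the lemma: the $O(D)$ round bound and the correctness of both the $J$-dissemination and the $\SP_J$-membership determination. The subphase consists of one local computation at the source followed by a single pipelined broadcast, so the analysis factors cleanly into these three parts.

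First, I would invoke Lemma~\ref{lem:thirdSubphaseFirstPhase} to conclude that after the third subphase, the source $s$ locally holds $sp_{s}[i]$ for every $i \in [d_{\max}]$. Since $s$ lies on every shortest path from $s$ to an arbitrary \destination in $\Layer_{i}$, we have $s \in \SP_{i}$ if and only if $Y \cap \Layer_{i} \neq \emptyset$. Consequently, $s$ can reconstruct $I = \{ i \in [d_{\max}] \mid sp_{s}[i] = 1 \}$ purely from local information and evaluate $J = \pi(I)$ with no communication.

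Next, I would analyze the broadcast of $J$ via the beep-wave primitive described in Section~\ref{sec:prelim}. The source encodes $J$ as the $d_{\max}$-bit indicator string $x_{1} x_{2} \cdots x_{d_{\max}}$, with $x_{i} = 1$ iff $i \in J$, and launches $d_{\max}$ consecutive beep waves: the $i$-th wave is a standard wave if $x_{i} = 1$ and silent otherwise. Using the standard per-wave spacing that prevents interference between consecutive waves, the pipelined broadcast completes in $O(D + d_{\max}) = O(D)$ rounds, where the bound $d_{\max} \leq D$ is used. The global clock synchronization guaranteed by Lemma~\ref{lem:secondSubphaseFirstPhase} ensures that every node enters listening mode at the same global round and decodes all $d_{\max}$ bits correctly, thereby recovering $J$.

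Finally, once node $v$ holds $J$, it computes the predicate $[\, v \in \SP_{J}\,]$ simply as $\bigvee_{j \in J} sp_{v}[j]$, using the array produced in the third subphase. A node for which this predicate is false outputs $0$ and terminates at Line~\ref{line:terminateSPhase2}, and otherwise it continues to the preprocessing phase. I do not anticipate a serious technical obstacle here; the only mildly delicate point is the pipelined broadcast of a $d_{\max}$-bit string rather than a single bit, which is a textbook use of the beep-wave primitive and causes no interference thanks to the fixed three-round-per-layer scheduling.
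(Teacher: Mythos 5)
Your proposal is correct and follows essentially the same route as the paper's proof: the source reconstructs $I$ from $sp_{s}[\cdot]$ (using that $s \in \SP_{i}$ iff $Y \cap \Layer_{i} \neq \emptyset$), computes $J = \pi(I)$, broadcasts its $d_{\max}$-bit indicator string via pipelined beep waves in $O(D + d_{\max}) = O(D)$ rounds, and each node then decides membership in $\SP_{J}$ locally from its own array. Your write-up is slightly more explicit about why $s$ can recover $I$ and about the pipelining of the broadcast, but there is no substantive difference.
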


\begin{proof}
After the third subphase, Lemma~\ref{lem:thirdSubphaseFirstPhase} guarantees that any $v$  knows for every $i \in [d_{\max}]$, whether $v \in \SP_i$. For $v=s$, learning $s\in \SP_i$ is equivalent to learning whether $\SP_i$ is empty, thus the \source learns the set~$I$. 
The \source then computes $J=\pi(I)$ and broadcasts the string that represents its indicator function in $O(D + d_{\max}) = O(D)$ rounds. 
Thus, all nodes learn the set~$J$. 
Since each node $v\in V$ knows~$J$ and knows for every $i \in [d_{max}]$, whether $v\in \SP_i$, 
each node knows whether it belongs to~$\SP_J$ or not. In the latter case, $v$ turns itself off giving the output~0 (Line~\ref{line:terminateSPhase2} of Algorithm~\ref{alg:wakeup}).
\end{proof}

We can now complete the proof of Theorem~\ref{thm:wakeup}.
\begin{proof}[Proof of Theorem~\ref{thm:wakeup}]
The $O(D)$ upper bound on the round complexity directly follows from Lemmas \ref{lem:firstSubphaseFirstPhase}, \ref{lem:secondSubphaseFirstPhase}, \ref{lem:thirdSubphaseFirstPhase} and \ref{lem:fourthSubphaseFirstPhase}.

By Lemma~\ref{lem:secondSubphaseFirstPhase}, every node $v \in V$ knows $\Distance(v,s)$ and $d_{max}$ at the end of the second subphase, satisfying (1) and (2). Moreover, all nodes complete the second subphase simultaneously. Since each node runs the third and fourth subphases for the same number of rounds, all nodes complete the wake-up phase simultaneously. Finally, by Lemma~\ref{lem:fourthSubphaseFirstPhase}, every node $v$ learns $J$ and whether $v \in \SP_J$, satisfying (3) and (4). This concludes the proof. 
\end{proof}

\section*{Acknowledgments}
We would like to thank Mohsen Ghaffari for pointing out that the work of Alon et al.~\cite{ABLP91} immediately gives a lower bound of $\Omega(\log^2 n)$ to the HBD task.
We would also like to thank Louis Esperet and Carla Groenland for pointing out that the accurate statement of Corollary~\ref{cor:existential-hbd-upper-bound} applies to hypergraphs of combinatorial size $n$, rather than hypergraphs over $n$ vertices.


\bibliographystyle{plainurl}
\bibliography{references}

\end{document}